\renewcommand*\env@matrix[1][*\c@MaxMatrixCols c]{%
  \hskip -\arraycolsep
  \let\@ifnextchar\new@ifnextchar
  \array{#1}}
\newtheorem{theorem}{Theorem}
\newtheorem{defn}{Definition}
\newtheorem{lemma}{Lemma}
\theoremstyle{remark}
\renewcommand{\title}[1]{\vbox{\center\LARGE{#1}}\vspace{5mm}}
\renewcommand{\author}[1]{\vbox{\center#1}\vspace{5mm}}
\newcommand{\address}[1]{\vbox{\center\em#1}}
\begin{document}

\begin{titlepage}
\begin{center}

\vskip 1cm

\setcounter{tocdepth}{2}

\title{Sublogarithmic Distillation in all Prime Dimensions using Punctured Reed-Muller Codes}

\author{Tanay Saha$^{1,2}$ and Shiroman Prakash$^{2}$}

\address{
${}^1$ Department of Mathematics, Simon Fraser University, Burnaby, B.C., Canada\\
${}^{2}$Department of Physics and Computer Science, Dayalbagh Educational Institute, Agra, India}

\date{} 
\end{center}

\vspace{0.3cm}
\begin{abstract}
Magic state distillation is a leading but costly approach to fault-tolerant quantum computation, and it is important to explore all possible ways of minimizing its overhead cost. The number of ancillae required to produce a magic state within a target error rate $\epsilon$ is $O(\log^{\gamma} (\epsilon^{-1}))$ where $\gamma$ is known as the yield parameter. Hastings and Haah derived a family of distillation protocols with sublogarithmic overhead (i.e., $\gamma < 1$) based on punctured Reed-Muller codes. Building on work by Campbell \textit{et al.} and Krishna-Tillich, which suggests that qudits of dimension $p>2$ can significantly reduce overhead, we generalize their construction to qudits of arbitrary prime dimension $p$. We find that, in an analytically tractable puncturing scheme, the number of qudits required to achieve sublogarithmic overhead decreases drastically as $p$ increases, and the asymptotic yield parameter approaches $\frac{1}{\ln p}$ as $p \to \infty$. We also perform a small computational search for optimal puncture locations, which results in several interesting triorthogonal codes, including a $[[519,106,5]]_5$ code with $\gamma=0.99$. 


\end{abstract}

\vfill

\end{titlepage}

\eject \tableofcontents

\clearpage

\newpage

\section{Introduction}
One of the most promising approaches to quantum fault tolerance is magic state distillation -- in which one first implements very high accuracy Clifford gates; these are used to  distill high accuracy `magic states' from a larger number of lower accuracy magic states \cite{MSD, knill2004faulttolerant}. These magic states are then used to perform high fidelity non-Clifford gates. Magic state distillation is estimated to be a major source of overhead\footnote{We should remark that, over the past decade, there have been several notable advances that may substantially reduce this overhead cost in practice (e.g., \cite{ litinski2019magic, chamberland2020very, bombin2024fault, gidney2024magicstatecultivationgrowing}.)}, and reducing this overhead is great theoretical and practical importance. While most work focuses on qubits, (e.g., \cite{Reichardt_2005, Reichardt_2009, Bravyi_2012,campbell2009structure, campbell2010bound, howard2016small, Haah_2017, Haah_2018, Hastings_2018, Nezami_2022}), there has been some significant work in magic state distillation with qudits of odd prime dimension $p$~\cite{PhysRevA.83.032310, ACB, CampbellAnwarBrowne, campbell2014enhanced, nature, DawkinsHoward, DPS2, Krishna_2019, Prakash2020contextual,2020golay, jain2020qutrit, prakash2024lowoverheadqutritmagic}. Some of these works, particularly, \cite{CampbellAnwarBrowne, campbellEnhanced, Krishna_2019, prakash2024lowoverheadqutritmagic}, suggest substantial reductions in overhead may be possible by increasing $p$. Here, we build on these works, by constructing new families of triorthogonal codes that allow for magic state distillation with sublogarithmic overhead, for qudits of arbitrary prime dimension $p$.  

Triorthogonal codes, first defined in \cite{Bravyi_2012}, are a special class of $[[n, k, d]]_p$ codes that admit a transversal non-Clifford gate from the third level of the Clifford hierarchy, known as a $T$ gate \cite{HowardVala}, and can be used to distill a corresponding magic state. Distilling via a triorthogonal code with parameters $[[n,k,d]]$, one can get $k$ output magic states with error rate $\epsilon_{out}=O(A_d \epsilon_{in}^d)$ using $n$ input states of error rate $\epsilon_{in}$, where $A_d$ is the number of logical-$Z$ operators in the code of weight $d$. Concatenating $z$ times, one obtains $k$ magic states with error rate $\epsilon_{\rm out} = O(A_d^{\frac{d^z-1}{d-1}} \epsilon_{in}^{d^z})$ from $\frac{n^z}{k^{z-1}}$ noisy magic states. Thus, the ratio of input to output magic states scales with the desired error rate, $\epsilon_{\rm out}$ as $O\pqty{\log^{\gamma} \left( \frac{1}{\epsilon_{out}} \right)}$ where the yield parameter, $\gamma = \frac{\log(n/k)}{\log(d)}$, characterizes the overhead \cite{MSD}. 

A Reed-Muller code was used in~\cite{MSD} to construct a $[[15,1,3]]_2$ code, now known to be triorthogonal, with a transversal non-Clifford gate that allowed for distillation with $\gamma=2.47$. \cite{Bravyi_2012} first formalized the concept of a triorthogonal code for qubits, and constructed a family of $[[3k+8, k, 2]]_2$ codes with $\gamma \to \log_2 3 \approx 1.6$.  Around the same time, the $[[15,1,3]]_2$ code of \cite{MSD} was generalized to qudits in \cite{CampbellAnwarBrowne}, who considered Reed-Muller codes of linear degree with a single puncture, and, subsequently, \cite{campbell2014enhanced}, who used Reed-Solomon codes of maximal degree with a single puncture to construct $[[n,1,d]]_p$ codes with $\gamma \to 1$, as $p \to \infty$. 

Magic state distillation protocols with $\gamma<1$ have sublogarithmic overhead. It was conjectured~\cite{Bravyi_2012} that $\gamma \geq 1$ for all protocols. This was disproved by \cite{Haah_2018, Hastings_2018}, who showed that punctured binary Reed-Muller codes give rise to triorthogonal codes with $\gamma<1$, with an explicit example with $\gamma=0.68$ -- the size of such codes is, however, extremely large. \cite{Krishna_2019} showed that, for qudits, Reed-Solomon codes of maximal degree with multiple punctures give rise to triorthogonal codes with $\gamma \to 0$ as $p \to \infty$ with much smaller block sizes, although the smallest $p$ for which $\gamma<1$ is $p=23$. 

More recently, families of binary triorthogonal codes with $\gamma \to 0$ have appeared \cite{wills2024constantoverheadmagicstatedistillation, golowich2024asymptotically}, perhaps making the question sublogrithmic distillation somewhat obsolete. Nevertheless, distillation protocols with sublogarithmic distillation are exceedingly rare, and seem to require inordinantly large block sizes, or qudit dimensionalities. It is therefore appears worthwhile to explore all possible constructions of such codes -- particularly, to help more precisely quantify the possible advantages of working with qudits over qubits. Moreover, there exists a very simple block construction of a family of small qutrit triorthogonal codes \cite{prakash2024lowoverheadqutritmagic}, for which $\gamma \to 1$. It appears to be a meaningful question to search for triorthogonal codes that outperform this simple family. 

Here, we present a natural extension of the work of \cite{CampbellAnwarBrowne, campbell2014enhanced, Krishna_2019} and generalize the constructions of \cite{Hastings_2018, Haah_2018} to qudits, to study qudit triorthogonal codes for all prime dimensions using punctured Reed-Muller codes of maximal degree. We construct a puncturing scheme in which the parameters of the resulting triorthogonal code are analytically computable, and show that these codes distill magic states with $\gamma < 1$ for qudits of any prime dimension. The size of the code, $n$, required to achieve $\gamma<1$ decreases drastically with $p$, as $p$ increases from $2$ to $23$. In the asymptotic limit of large $n$, the yield parameter of our codes vanishes as $\frac{1}{\ln p}$ as $p \to \infty$. We also perform a randomized computer search over puncture locations to construct qutrit and ququint punctured Reed-Muller codes with high yields -- this revealed a $[[519,106,5]]_5$ code with $\gamma=0.99$. To our knowledge, all  explicit constructions of binary triorthogonal codes giving rise to sublogarithmic distillation require block sizes several orders of magnitude greater.

Our paper is organized as follows. In section \ref{sec:triorthogonal-codes}, we review the construction of qudit triorthogonal codes from triorthogonal spaces. In section \ref{sec:reed-muller} we review Reed-Muller codes and discuss how they can be used to construct triorthogonal spaces, and then punctured to construct triorthogonal codes. In section \ref{sec:sublogarithmic}, we present a puncturing scheme generalizing that of \cite{Hastings_2018} that allows us to construct a family of $[[n,k,d]]_p$ triorthogonal codes, that give rise to distillation routines with sub-logarithmic yield when the codes are sufficiently large. An appendix contains more details about the asymptotic performance of this code. In section \ref{sec:other-schemes}, we summarize results from a computational search for other puncturing schemes that give rise to qutrit and ququint triorthogonal codes with better parameters. In section \ref{sec:conclusion}, we conclude with some discussion regarding the possible advantages of fault-tolerant quantum computing with qudits rather than qubits.

\section{Triorthogonal codes}
\label{sec:triorthogonal-codes}
First defined in~\cite{Bravyi_2012} and generalized to qudits with prime dimension in~\cite{Krishna_2019}, triorthogonal codes are CSS quantum error-correcting codes transversal in the $T$-gate, a diagonal gate from the third-level of the Clifford hierarchy, defined for qudits of arbitrary prime dimension in \cite{HowardVala, CampbellAnwarBrowne}. Triorthogonal codes are constructed from (classical) triorthogonal spaces, as we now review.

Let $p \in \mathbb{N}$ be prime and let $\mathbb{F}_p$ denote the finite field with $p$ elements. Define the $*$-product of two vectors $v=(v_1, v_2, \dots) \in \mathbb{F}_p^n$ and $w=(w_1, w_2, \dots) \in \mathbb{F}_p^n$ to be the vector $v*w=(v_1 w_1, v_2 w_2, \dots, v_n w_n)$.

\begin{defn}[Classical triorthogonal space]
    A linear space $\mathcal{C}$ over the field $\mathbb{F}_p$ is classical tri-orthogonal if:
    \begin{enumerate}
        \item $\forall x, y, z \in \mathcal{C}$, $\sum_i (x*y*z)_i=0 \; (mod \; p)$
        \item $\forall x, y \in \mathcal{C}$, $\sum_i (x*y)_i=0 \; (mod \; p)$.
    \end{enumerate}
\end{defn}
Note that, from its definition, any triorthogonal space is also self-orthogonal. 

Let $G$ be the generator matrix of a triorthogonal space. Let $G'$ be the generator matrix of a triorthogonal space punctured in $k$ locations, and let $G_0$ be the generator matrix of that space shortened in the same $k$ locations,
\begin{equation}
    G_{n_0 \times k_0} \xrightarrow[\text{$k$ columns}]{\text{Puncture @}} G' = \begin{pNiceMatrix}
        \dots & r_1 & \dots \\
        \dots & \vdots & \dots \\
        \dots & r_k & \dots \\
        \hline
        \dots & r_{k+1} & \dots \\
        \dots & \vdots & \dots
        \CodeAfter
        \tikz
        \draw [decorate, thick, decoration = brace]
        ([xshift=3mm]1-3.north east) to node [auto = left] {$G_1$} 
        ([xshift=3mm]3-3.south east) ;
        \tikz
        \draw [decorate, thick, decoration = brace]
        ([xshift=3mm]4-3.north east) to node [auto = left] {$G_0$} 
        ([xshift=3mm]5-3.south east) ;
        \end{pNiceMatrix}.
\end{equation}

Let $\mathcal G'$ and $\mathcal G_0$ be the linear classical codes generated by these generator matrices. Clearly, $\mathcal G_0 \subseteq \mathcal G'$ and $G'$ can be written as, $G'=\begin{pmatrix} G_1 \\ \hline G_0 \end{pmatrix}$. It is shown in \cite{Bravyi_2012} (and \cite{Krishna_2019} for qudits) that the $CSS(G_0 \to X, G'{}^{\perp} \to Z)$ is a CSS code with the desired transversal $T$-gate, known as a triorthogonal code. Assuming that the submatrix of $G$ that survives after puncturing has full rank, the length of the CSS code is $n_0-k$ and the dimension is $k$. For non-degenerate\footnote{If the code is degenerate, the distance of the code is the minimum Hamming weight of a vector in $\mathcal G_0/\mathcal G'{}^\perp$, which is greater than or equal to $d(\mathcal G_0^\perp)$. The codes studied in this paper are always non-degenerate.} $G_0$ and $G'$, the distance of this code is $\min(d(G_0^\perp), d(G'))=d(G_0^\perp)$. (This is because, by triorthogonality, $G_0 \subseteq G'{}^\perp$, and therefore $G' \subseteq G_0^\perp$.)

\section{Reed-Muller codes}
\label{sec:reed-muller}
Reed-Muller codes are constructed from the linear vector space of polynomials of total degree $r$ in $m$ variables over $\mathbb F_p$. They provide an elegant construction of triorthogonal codes, as observed in \cite{MSD, CampbellAnwarBrowne, Haah_2018, Hastings_2018}, as we now review.

\subsection{Triorthgonal spaces from Reed-Muller codes}

Let $RM_p(r, m)$ denote the classical Reed-Muller code over $\mathbb F_p$. A codeword of $RM_p(r, m)$ is a complete list of function values ${\rm ev}(f)=(f(\vec{v}) : \vec{v} \in \mathbb{F}_p^m) \in \mathbb{F}_p^{p^m}$ where $f$ is a polynomial of total degree at most $r$ in $m$ variables $(x_1, \ldots x_m)$ such that $x_i^p=x_i$. In other words, $RM_p(r, m) = \{ f : \deg(f) \leq r, f \in \mathbb{F}_p[x_1, x_2, \dots, x_m]/\langle \dots, x_i^p - x_i, \dots \rangle \}$.  We do not distinguish between a codeword and the polynomial that gives rise to that codeword in $RM_p(r, m)$. Note, also, that $RM_p(r, m)$ is invariant under permutations of coordinates that correspond to affine transformations of $\mathbb{F}^m$, because the transformation, $\vec{r} \to M \vec{r}+\vec{r}_0$ does not change the degree of $f(\vec{r})$.

The degree $r$ of a Reed-Muller code $RM_p(r, m)$ must satisfy \begin{equation}r \leq m(p-1), \label{max-degree-RM}\end{equation} because $x^p=x \mod p$.

It is a well-known fact that the dual of a Reed-Muller code is another Reed-Muller code.
\begin{lemma}
    $RM_p (r, m)^{\perp} = RM_p (\tilde{r}, m)$ where $\tilde{r}=m(p-1)-r-1$. \label{dual-thm}
\end{lemma}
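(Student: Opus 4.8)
The plan is to prove this by exhibiting an explicit pairing between monomials and showing the two codes in question are orthogonal complements by a dimension count. First I would work in the quotient ring $R = \mathbb{F}_p[x_1,\dots,x_m]/\langle x_i^p - x_i\rangle$, whose monomial basis consists of the $p^m$ ``reduced'' monomials $x^{\vec a} = x_1^{a_1}\cdots x_m^{a_m}$ with each exponent $0 \le a_i \le p-1$; the degree of such a monomial is $|\vec a| = \sum_i a_i$, which ranges from $0$ to $m(p-1)$. Evaluation $\mathrm{ev}: R \to \mathbb{F}_p^{p^m}$ is a linear isomorphism (a polynomial of this reduced form vanishing at every point of $\mathbb{F}_p^m$ is the zero polynomial), so $\dim RM_p(r,m) = \#\{\vec a : 0 \le a_i \le p-1,\ |\vec a| \le r\}$. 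This immediately gives the dimension check: $\dim RM_p(r,m) + \dim RM_p(\tilde r, m) = p^m$ when $\tilde r = m(p-1) - r - 1$, because the complement map $\vec a \mapsto (p-1-a_1,\dots,p-1-a_m)$ is a bijection sending $\{|\vec a| \le r\}$ onto $\{|\vec a| \ge m(p-1) - r\} = \{|\vec a| > \tilde r\}$.

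Next I would establish the orthogonality $RM_p(\tilde r, m) \subseteq RM_p(r,m)^\perp$, which by the dimension count forces equality. It suffices to show that for reduced monomials $x^{\vec a}$ and $x^{\vec b}$ with $|\vec a| \le r$ and $|\vec b| \le \tilde r = m(p-1) - r - 1$, the inner product $\sum_{\vec v \in \mathbb{F}_p^m} \vec v^{\,\vec a}\, \vec v^{\,\vec b}$ vanishes in $\mathbb{F}_p$. This sum factors over coordinates as $\prod_{i=1}^m \big(\sum_{t \in \mathbb{F}_p} t^{a_i + b_i}\big)$. The key elementary fact is the character-sum identity: for an integer $e \ge 0$, $\sum_{t \in \mathbb{F}_p} t^e$ equals $-1$ if $e$ is a positive multiple of $p-1$, and $0$ otherwise (with the convention $0^0 = 1$, so $e = 0$ gives $p \equiv 0$). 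Since $|\vec a| + |\vec b| \le r + \tilde r = m(p-1) - 1 < m(p-1)$, at least one coordinate has $a_i + b_i < p-1$; for that $i$ the inner factor $\sum_t t^{a_i+b_i}$ is $0$ unless $a_i + b_i = 0$, and if $a_i+b_i = 0$ that factor is $\sum_t 1 = p \equiv 0$ as well. Hence every product of this form vanishes, giving the containment.

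The main obstacle, such as it is, is purely bookkeeping: one must be careful that evaluation is an isomorphism on the reduced quotient ring (so that the dimension formula is exact and the monomial pairing argument is legitimate), and one must handle the edge case $a_i + b_i = 0$ in the character-sum step, since $\sum_{t} t^0 = p \equiv 0 \pmod p$ rather than following the ``multiple of $p-1$'' branch. No step requires more than the finite-field identity $\sum_{t\in\mathbb{F}_p} t^e \equiv -[\,(p-1)\mid e,\ e>0\,] \pmod p$ and a counting bijection, so the proof is short; I would present the character-sum lemma first, then the orthogonality, then close with the dimension count.
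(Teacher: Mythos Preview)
Your proposal is correct and follows essentially the same approach as the paper: both rest on the character-sum identity $\sum_{t\in\mathbb{F}_p} t^e = 0$ unless $e$ is a positive multiple of $p-1$, applied coordinatewise to products of monomials. The paper merely cites this fact in one line, whereas you spell out the dimension count via the complement bijection $a_i \mapsto p-1-a_i$ and carefully handle the $a_i+b_i=0$ edge case; this extra detail is welcome but does not constitute a different route.
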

This follows from the fact that 
\begin{equation}
        \sum_{x,y \in \mathbb{F}_p} x^{\alpha}y^\beta \neq 0
\end{equation} iff $\alpha = 0 \mod (p-1)$ and $\beta=0 \mod (p-1)$.

As an immediate corollary of the above lemma, we have the following.
\begin{theorem}
$RM_p (r, m)$ is a classical triorthogonal space if $3r< m(p-1)$. 
\end{theorem}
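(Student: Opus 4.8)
The plan is to verify both defining conditions of a classical triorthogonal space directly using the $*$-product structure and Lemma~\ref{dual-thm}. First I observe that the $*$-product is compatible with the polynomial description: if $f,g,h$ are polynomials of degrees at most $r$, then the coordinatewise product ${\rm ev}(f)*{\rm ev}(g)*{\rm ev}(h) = {\rm ev}(fgh)$, where the product $fgh$ is taken in the ring $\mathbb{F}_p[x_1,\dots,x_m]/\langle x_i^p - x_i\rangle$. The key point is that reducing modulo $x_i^p - x_i$ never increases the total degree (each reduction replaces $x_i^p$ by $x_i$, strictly decreasing degree), so $\deg(fgh) \le \deg(f)+\deg(g)+\deg(h) \le 3r$. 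Similarly ${\rm ev}(f)*{\rm ev}(g) = {\rm ev}(fg)$ with $\deg(fg) \le 2r$.

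Next I translate the two triorthogonality conditions into membership statements. Condition~1 asks that $\sum_{\vec v \in \mathbb{F}_p^m} (fgh)(\vec v) = 0$ for all $f,g,h \in RM_p(r,m)$; condition~2 asks that $\sum_{\vec v} (fg)(\vec v) = 0$ for all $f,g$. But the full-sum functional $F \mapsto \sum_{\vec v} F(\vec v)$ is, up to sign, the inner product of ${\rm ev}(F)$ with the all-ones vector, and the all-ones vector is exactly the evaluation of the constant polynomial $1 \in RM_p(0,m) \subseteq RM_p(\tilde r,m) = RM_p(r,m)^\perp$. Hence $\sum_{\vec v} F(\vec v) = 0$ whenever ${\rm ev}(F) \in RM_p(r,m)^{\perp\perp} \supseteq RM_p(r,m)$ — more directly, whenever $F \in RM_p(m(p-1)-1, m)$, since $RM_p(m(p-1)-1,m)^\perp = RM_p(0,m)$ contains the all-ones word. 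So it suffices that $fgh$ and $fg$, after reduction, lie in $RM_p(m(p-1)-1,m)$, i.e. have degree at most $m(p-1)-1$.

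Putting this together: the hypothesis $3r < m(p-1)$ gives $\deg(fgh) \le 3r \le m(p-1)-1$, so ${\rm ev}(fgh)$ is orthogonal to the all-ones vector and condition~1 holds. Since $2r < 3r < m(p-1)$ as well (here $r \ge 1$; the $r=0$ case is trivial), we also get $\deg(fg) \le 2r \le m(p-1)-1$, so condition~2 holds. This is exactly the statement that $RM_p(r,m)$ is a classical triorthogonal space.

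The only genuinely delicate point — and the step I would be most careful about — is the claim that multiplying representatives and then reducing modulo the ideal $\langle x_i^p - x_i \rangle$ does not inflate the total degree beyond the sum of the degrees; equivalently, that the reduced representative of $fgh$ has degree $\le 3r$. This is where the constraint $x_i^p = x_i$ (rather than a generic quotient) is essential: each rewriting step $x_i^{p+e} \mapsto x_i^{1+e}$ lowers the degree in $x_i$ by $p-1 \ge 1$, so the process terminates at a polynomial of no larger total degree. I would state this reduction lemma explicitly (it is implicitly the same fact that powers the ``only if'' direction of Lemma~\ref{dual-thm}) and then the rest is immediate bookkeeping on degrees.
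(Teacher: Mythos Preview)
Your proof is correct and follows essentially the same route as the paper: both use that ${\rm ev}(f)*{\rm ev}(g)*{\rm ev}(h) = {\rm ev}(fgh)$ with $\deg(fgh)\le 3r < m(p-1)$, and then conclude that the coordinate sum vanishes. The only cosmetic difference is that the paper appeals directly to the identity $\sum_{x\in\mathbb{F}_p} x^e = 0$ for $e<p-1$ (applied variable by variable), whereas you package that same fact via Lemma~\ref{dual-thm} and orthogonality to the all-ones vector; your explicit remark that reduction modulo $x_i^p-x_i$ never increases total degree is a detail the paper leaves implicit.
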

\begin{proof}
    Let $r_{\rm max}=\lfloor \frac{m(p-1)-1}{3} \rfloor$. In finite fields, $\sum_{x \in \mathbb{F}_p} x^r = 0 \; \forall r < p-1$. So, $|ev(f_1) * ev(f_2) * ev(f_3)|_M = |ev(f_1 f_2 f_3)|_M = 0$ if $r < r_{\rm max}$, using the summation relation on all $m$ variables.
\end{proof}

The distance of a classical Reed-Muller code $RM_p(r,m)$ is determined by polynomial in $m$ variables of degree $r$ with the most zeros. This is given by the Schwartz-Zippel lemma, \cite{DEMILLO1978193, schwartz1980, Zippel1979}, which we reproduce here,
\begin{theorem} The Schwartz-Zippel Lemma states that, given a multivariate polynomial $f(\vec{x})$ of degree $r$ in $m$ variables over $\mathbb{F}_p$, the total number of nonzero entries satisfies
\begin{equation}
    |\{\vec{x} | f(\vec{x}) \neq 0 \} | \geq p^{m-\lfloor\frac{r}{p-1}\rfloor} \left(1 - \frac{r \bmod (p-1)}{p} \right). 
\end{equation}
\end{theorem}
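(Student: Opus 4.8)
The plan is to restate the claim as a lower bound on the number of nonzero evaluations and prove it by induction on the number of variables $m$. Write $r = a(p-1)+b$ with $a = \lfloor r/(p-1)\rfloor$ and $b = r \bmod (p-1) \in \{0,1,\dots,p-2\}$, and put $W_m(r) := (p-b)\,p^{m-a-1}$, so the asserted bound is exactly $W_m(r) = p^{m-a}(1-\tfrac{b}{p})$. The statement to prove is then: every nonzero polynomial of total degree at most $r$ in the reduced ring $\mathbb{F}_p[x_1,\dots,x_m]/\langle x_i^p-x_i\rangle$ is nonzero on at least $W_m(r)$ points of $\mathbb{F}_p^m$. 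Two easy facts will be used repeatedly: $W_{m-1}$ is non-increasing in its degree argument (raising the degree bound only enlarges the space of polynomials), and a reduced polynomial in $m-1$ variables has total degree at most $(m-1)(p-1)$, since every variable occurs with exponent at most $p-1$.

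The base case $m=1$ is immediate: a nonzero reduced univariate polynomial has degree at most $r\le p-1$, hence at most $r$ roots, hence at least $p-r = W_1(r)$ nonzero values. For the inductive step I would expand $f = \sum_{j=0}^{p-1} g_j(x_1,\dots,x_{m-1})\,x_m^{j}$ and let $s$ be the largest index with $g_j \not\equiv 0$; note $\deg g_s \le r-s$. For every $\vec a \in \mathbb{F}_p^{m-1}$ with $g_s(\vec a)\neq 0$, the univariate restriction $f(\vec a,x_m)$ is nonzero of degree exactly $s\le p-1$, so it is nonzero for at least $p-s$ values of $x_m$; and the number of such $\vec a$ is at least $W_{m-1}(r-s)$ by the induction hypothesis applied to the nonzero $(m-1)$-variable polynomial $g_s$ (using monotonicity of $W_{m-1}$ if $r-s > (m-1)(p-1)$). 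Hence $f$ is nonzero on at least $(p-s)\,W_{m-1}(r-s)$ points of $\mathbb{F}_p^m$.

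What remains is the purely arithmetic inequality $(p-s)\,W_{m-1}(r-s) \ge W_m(r)$ for all $0\le s\le p-1$, which I would dispatch by expressing $r-s$ in the form $a'(p-1)+b'$ with $0\le b'\le p-2$. If $s\le b$ then $r-s = a(p-1)+(b-s)$, so $W_{m-1}(r-s) = (p-b+s)\,p^{m-a-2}$ and the inequality reduces to $s(b-s)\ge 0$. If $s>b$ (which forces $a\ge 1$, since $a=0$ would give $s\le r=b$) then $r-s = (a-1)(p-1)+(p-1+b-s)$, so $W_{m-1}(r-s) = (1+s-b)\,p^{m-a-1}$ and the inequality reduces to $(s-b)(p-1-s)\ge 0$; both hold trivially given the constraints on $s$, and this closes the induction. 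I expect the main obstacle to be exactly this boundary bookkeeping: keeping the digit $b'$ inside $\{0,\dots,p-2\}$ in each case so the closed form for $W_{m-1}$ genuinely applies, and absorbing the degenerate ranges ($a=0$, $s=0$, $s=p-1$, or $r-s>(m-1)(p-1)$) through monotonicity of $W_{m-1}$ rather than the literal formula. As a sanity check, the bound is sharp: it is attained by $\prod_{i=1}^{a}(x_i^{p-1}-1)\cdot\prod_{c=1}^{b}(x_{a+1}-c)$.
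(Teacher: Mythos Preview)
Your proof is correct. The paper does not actually prove this statement; it merely states the Schwartz--Zippel lemma with citations to \cite{DEMILLO1978193, schwartz1980, Zippel1979} and records the resulting formula for the Reed--Muller distance, so there is no in-paper argument to compare against. What you have written is the standard inductive proof of the sharp finite-field Schwartz--Zippel bound: isolate the leading coefficient $g_s$ in the last variable, apply the induction hypothesis to $g_s$, count at least $p-s$ nonzero values of $x_m$ over each good base point, and then verify the elementary inequality $(p-s)W_{m-1}(r-s)\ge W_m(r)$ via the two-case digit split of $r-s$. Your arithmetic in both cases is right, your handling of the boundary situations (in particular $r-s>(m-1)(p-1)$, which only arises when $a\ge m-1$ and is absorbed by the trivial bound $W_{m-1}\ge 1$) is adequate, and your tightness witness $\prod_{i=1}^{a}(x_i^{p-1}-1)\prod_{c=1}^{b}(x_{a+1}-c)$ is the usual one (up to sign it is the same minimum-weight codeword the paper later uses as $m_0$ in Theorem~\ref{distance-theorem}).
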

Here, $r \bmod (p-1)$ denotes the the remainder when $r$ is divided by $p-1$. For $p=2$, $r \bmod (2-1)=0$. This bound is tight, so the distance of a classical Reed-Muller code is, 
\begin{equation}
    d_{RM}(r,m) =p^{m-\lfloor\frac{r}{p-1}\rfloor} \left(1 - \frac{r \bmod (p-1)}{p} \right). 
\end{equation}

\subsection{Puncturing Reed-Muller codes}
We can form a triorthogonal code from this Reed-Muller space by puncturing it in a $k$ different locations, which we denote as $S=\{ \vec{x}_1, \ldots \vec{x}_k\}$. 

\begin{theorem}
    The distance of the quantum triorthogonal code constructed from puncturing a $RM_p(r,m)$ triorthogonal space is $d(PRM_p (\tilde{r}, m; S))$ where $\tilde{r}=m(p-1)-r-1$. \label{general-distance-theorem}
\end{theorem}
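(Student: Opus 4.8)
The plan is to reduce the statement to two ingredients already available: the distance formula for triorthogonal CSS codes from Section~\ref{sec:triorthogonal-codes}, and the elementary duality between puncturing and shortening a linear code, combined with Reed--Muller duality (Lemma~\ref{dual-thm}).

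First, recall from Section~\ref{sec:triorthogonal-codes} that the triorthogonal code obtained by puncturing the $RM_p(r,m)$ triorthogonal space in the locations $S=\{\vec x_1,\dots,\vec x_k\}$ is $CSS(G_0 \to X,\, G'{}^{\perp}\to Z)$, where $\mathcal G'$ is $RM_p(r,m)$ punctured at $S$ and $\mathcal G_0$ is $RM_p(r,m)$ shortened at $S$; and that, for non-degenerate $G_0$ and $G'$ (which, as noted, is always the case here), its distance equals $\min\big(d(\mathcal G_0^{\perp}),\, d(\mathcal G')\big)=d(\mathcal G_0^{\perp})$. So it suffices to identify $\mathcal G_0^{\perp}$ as a punctured Reed--Muller code. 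For this, invoke the standard fact that, for any linear code $C$ of length $n$ and any set $S$ of coordinates, shortening and puncturing are exchanged under duality: $(\mathrm{shorten}_S C)^{\perp}=\mathrm{puncture}_S(C^{\perp})$. Applying this with $C=RM_p(r,m)$ gives $\mathcal G_0^{\perp}=\mathrm{puncture}_S\big(RM_p(r,m)^{\perp}\big)$, and then Lemma~\ref{dual-thm} yields $\mathcal G_0^{\perp}=\mathrm{puncture}_S\big(RM_p(\tilde r,m)\big)=PRM_p(\tilde r,m;S)$ with $\tilde r=m(p-1)-r-1$. Hence the quantum code has distance $d\big(PRM_p(\tilde r,m;S)\big)$, as claimed.

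The only points requiring care are (i) that the generator submatrix surviving the puncture has full rank, so that the parameters and the distance formula of Section~\ref{sec:triorthogonal-codes} genuinely apply — this is the standing assumption on $S$ — and (ii) getting the orientation of the puncture/shorten duality right, since in the CSS construction it is the shortened space $\mathcal G_0$ that supplies the $X$-type generators while its dual governs the code distance. Neither is an obstacle; the substance of the theorem is just the bookkeeping identity $(\mathrm{shorten}_S C)^{\perp}=\mathrm{puncture}_S(C^{\perp})$ together with Reed--Muller duality. If a self-contained argument is preferred, one can instead argue directly: a minimum-weight logical $Z$ operator corresponds to a vector of $\mathcal G_0^{\perp}\setminus \mathcal G'$, and since $\mathcal G'\subseteq\mathcal G_0^{\perp}$ with the codes non-degenerate, its weight is exactly $d(\mathcal G_0^{\perp})$, which the duality identity then evaluates to $d\big(PRM_p(\tilde r,m;S)\big)$.
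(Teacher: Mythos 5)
Your first half is exactly the paper's: you invoke the puncture/shorten duality and Lemma~\ref{dual-thm} to identify $\mathcal G_0^{\perp}=PRM_p(\tilde r,m;S)$, and combine this with the triorthogonal CSS distance formula $\min\big(d(\mathcal G_0^{\perp}),d(\mathcal G')\big)=d(\mathcal G_0^{\perp})$. That bookkeeping is correct and is precisely what the paper does.

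The gap is in how you handle degeneracy. You write ``for non-degenerate $G_0$ and $G'$ (which, as noted, is always the case here)'' and treat this as a standing assumption, but the footnote in Section~\ref{sec:triorthogonal-codes} is not an assumption granted for free --- it is a claim that the paper establishes in the proof of exactly this theorem. The issue is that the true $Z$-distance of the CSS code is the minimum weight over the nonzero cosets $\mathcal G_0^{\perp}/\mathcal G'^{\perp}$, which a priori could be \emph{strictly larger} than $d(\mathcal G_0^{\perp})$ if every minimum-weight codeword of $\mathcal G_0^{\perp}=PRM_p(\tilde r,m;S)$ happened to lie in the stabilizer subcode $\mathcal G'^{\perp}=SRM_p(\tilde r,m;S)$. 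To pin down the distance as \emph{equal} to $d(PRM_p(\tilde r,m;S))$ one must exhibit a minimum-weight codeword of the punctured code that is \emph{not} in the shortened code, i.e.\ one whose preimage polynomial $f\in RM_p(\tilde r,m)$ is nonzero at some puncture location. The paper does this with a short affine-invariance argument: if a minimum-weight $f$ vanishes on all puncture locations, act by an affine transformation of $\mathbb F_p^m$ (which preserves $RM_p(\tilde r,m)$) to move some nonzero value onto a puncture location, producing a codeword of weight $\le d$ in $PRM_p(\tilde r,m;S)\setminus SRM_p(\tilde r,m;S)$. Without this step, the conclusion ``distance $=d(PRM_p(\tilde r,m;S))$'' only holds as a lower bound. (Also, in your last sentence the logical $Z$ operators should be the nonzero elements of $\mathcal G_0^{\perp}\setminus\mathcal G'^{\perp}$, not $\mathcal G_0^{\perp}\setminus\mathcal G'$; the relevant quotient is by the $Z$-stabilizer group $\mathcal G'^{\perp}$.)
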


\begin{proof}
    $\mathcal G_0$ will then be a shortened Reed-Muller  code, denoted as $SRM_p(r,m; S)$, while $\mathcal G'$ will be a punctured Reed-Muller code, denoted as $PRM_p(r,m; S)$.
    Recall that the dual of a shortened code is a punctured code, so $\mathcal G_0^\perp=SRM_p(r,m; S)^\perp = PRM_p (\tilde{r}, m; S)$, and $d(PRM_p (\tilde{r}, m; S))$ assuming non-degeneracy.
     
    To see that degeneracy does not further increase the distance of the code, note that $\mathcal G'{}^\perp=PRM_p(r,m; S)^\perp = SRM_p (\tilde{r}, m; S)$. There always exists at least one codeword of minimum weight in $PRM_p (\tilde{r}, m; S)$ that is not be present in $SRM_p (\tilde{r}, m; S)$. A codeword of minimum weight $d$ in $PRM_p(\tilde{r},m; S)$ corresponds to a polynomial $f$ in $RM((\tilde{r},m)$. If $f$ vanishes on all puncture locations, one can act on it with an affine transformation to obtain another polynomial $f'$ that is non-zero on one of the puncture locations, which gives rise to codeword of weight $d' \leq d$ in $PRM_p(\tilde{r},m; S)$. Either $f$ or $f'$ is nonzero on one of the puncture locations, and is therefore absent from $SRM_p(\tilde{r},m; S)$.
\end{proof}  

Campbell \textit{et al.} \cite{CampbellAnwarBrowne} punctured $RM_p(1,m)$ once to construct triorthogonal codes for qudits. However, it is always advantageous to define a triorthogonal space using $RM_p(r,m)$, with $r$ taken to be as large as possible, subject to $3r<p(m-1)$. To see this, note that $\tilde{r}(r)$ from Theorem \ref{general-distance-theorem} is a monotonically decreasing function of $r$. For $\tilde{r}_1<\tilde{r}_2$, $PRM_p (\tilde{r}_1, m) \subset PRM_p (\tilde{r}_2, m)$, so $d(PRM_p (\tilde{r}_1, m)) \geq PRM_p(\tilde{r}_2, m)$. 

Reed-Muller codes of maximal degree, $r=r_{\rm max} = \lfloor \frac{p(m-1) -1}{3} \rfloor$ therefore provide a natural construction of triorthogonal spaces. However, one is then faced with the question: how does one choose the set of puncture locations? For fixed $k$, one may wish to choose $k$ puncture locations in order to maximize the distance $d$ of the resulting triorthogonal code. Moreover, one also wishes to vary the number of puncture locations $k$ in order to optimize $\gamma$. These appear to be non-trivial combinatorial questions.

\subsection{Codes with one or two punctures}
Let us first discuss the case of Reed-Muller codes with one or two punctures. Triorthogonal codes arising from Reed-Muller codes with one or two punctures do not give optimal yields. However, for a given $m$, they have the largest possible distance. The codes we construct in this subsection always outperform the punctured $RM_p(1,m)$ codes of \cite{CampbellAnwarBrowne}, but need not outperform the $RM_p(r,1)$ codes (i.e., Reed-Solomon codes) of \cite{campbellEnhanced}. 

We first have the following lemma. 
\begin{lemma} The classical distance of a punctured Reed-Muller code $RM_p (r, m)$ with one or two punctures is independent of the locations of the punctures. Assume $d_{RM}(r,m)\geq 2$. For one puncture, the distance is $d_{RM}(r,m)-1$, and for two punctures the distance is $d_{RM}(r,m)-2$. 
\end{lemma}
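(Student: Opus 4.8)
The claim is about the classical (Hamming) distance of a Reed–Muller code punctured in one or two coordinates, and the key point is that the answer does not depend on which coordinates are punctured. The plan is to exploit the affine symmetry of $RM_p(r,m)$ noted earlier in the paper: the automorphism group of the code acts on coordinates as the affine group $\mathrm{AGL}(m,p)$, which is doubly transitive on $\mathbb{F}_p^m$. Consequently, for any two single puncture locations $\vec{x}_1$ and $\vec{x}_1'$ there is an affine permutation carrying one to the other and fixing the code setwise, so the two punctured codes are permutation-equivalent and have the same distance; likewise any unordered pair $\{\vec{x}_1,\vec{x}_2\}$ can be mapped to any other unordered pair by double transitivity. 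This reduces the problem to computing the distance for one convenient choice of puncture set, say $\vec{x}_1 = \vec{0}$ (and $\vec{x}_2$ a fixed second point).

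\textbf{Computing the distance after puncturing.} Next I would bound the distance from both sides. Puncturing can only decrease weights, and it decreases the weight of a given codeword by at most the number of punctured coordinates; so $d(PRM_p(r,m;S)) \geq d_{RM}(r,m) - |S|$ provided no minimum-weight codeword is entirely supported on $S$ — which is automatic once $d_{RM}(r,m) \geq |S|+1$, but we must be slightly more careful for the exact value. For the matching upper bound I would take a minimum-weight codeword $f$ of $RM_p(r,m)$, i.e.\ a polynomial of degree $\le r$ with exactly $d_{RM}(r,m)$ nonzero evaluations, and argue that the puncture locations can be assumed (using affine symmetry again, now applied to $f$ rather than to $S$) to lie inside the support of $f$: since the support has size $d_{RM}(r,m) \geq 2 \geq |S|$, and the affine group is doubly transitive, we can move a chosen $|S|$-subset of the support onto $S$. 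After puncturing, the image of $f$ has weight exactly $d_{RM}(r,m)-|S|$, giving $d(PRM_p(r,m;S)) \leq d_{RM}(r,m)-|S|$. Combining the two inequalities gives equality, and independence of puncture location follows from the symmetry argument of the first paragraph.

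\textbf{Main obstacle.} The routine steps are the two weight inequalities; the real subtlety is the hypothesis $d_{RM}(r,m) \geq 2$ and ensuring both that the double-transitivity argument genuinely applies (the affine group is $2$-transitive on $\mathbb{F}_p^m$, but one should check $|\mathbb{F}_p^m|\ge 2$, which is immediate, and that we never need to move three or more points in this lemma) and that in the one-puncture case the lower bound $d_{RM}(r,m)-1$ is actually attained and not accidentally larger — this is exactly what the upper-bound construction via a minimum-weight polynomial secures. A secondary point to verify is that the punctured code does not collapse in dimension, i.e.\ that puncturing in one or two coordinates keeps the generator matrix full rank for the parameter ranges considered; but for the distance statement this is not needed, since the distance of a (possibly smaller) code is still well-defined. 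I expect the cleanest writeup to state the affine double-transitivity once, then do the two-sided weight estimate uniformly for $|S|\in\{1,2\}$.
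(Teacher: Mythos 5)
Your proof is correct and takes essentially the same approach as the paper's: both use affine invariance (double transitivity of $\mathrm{AGL}(m,p)$) to fix puncture locations WLOG, and both establish the upper bound by applying an affine map to a minimum-weight codeword so its support covers the puncture set. The only difference is that you state the trivial lower bound $d(PRM) \geq d_{RM} - |S|$ explicitly, which the paper leaves implicit; this is a modest improvement in rigor but not a different argument.
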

\begin{proof}
$RM_p(r,m)$ is invariant under permutations of coordinates that correspond to affine transformations of $\mathbb{F}_p^m$. We can therefore, without loss of generality choose the first puncture to be at the point $(0, \ldots, 0) \in \mathbb{F}_p^m$, and the second puncture to be at $(1, \ldots, 0)  \in \mathbb{F}_p^m$. 

Consider any minimum weight codeword in the unpunctured Reed-Muller code, which corresponds to a polynomial $f$ which is non-zero on at least 2 points in $\mathbb F_p^m$, by assumption. One can act on this polynomial with an affine transformation on $\mathbb F_p^m$ to obtain a minimal weight polynomial which is non-zero on both puncture locations. The distance of the code is therefore reduced by $1$ for each of the first two punctures.
\end{proof}
Note that, for $p>2$, there are several possibly-inequivalent choices for the location of the third and subsequent punctures, depending on whether the point $\vec{x}$ at which the third puncture is located is taken to be collinear with the first two punctures in $\mathbb F_p^m$. It is, therefore, also possible that, if the location is chosen carefully, the third puncture does not further reduce the distance of the resulting triorthogonal code.

The parameters of a triorthogonal code arising from puncturing $RM_p(r_{\rm max}, m)$ once or twice are $[[p^m -k, k, d_{RM}(\tilde{r}(r_{\rm max},m)]-k]_p$, for $k \leq 2$.

Parameters of the smallest ternary Reed-Muller codes with a single puncture are $[[8, 1, 2]]_3$, $[[26, 1, 2]]_3$, $[[80, 1, 5]]_3$, $[[242, 1, 8]]_3$, $[[728, 1, 8]]_3$, $[[2186, 1, 17]]_3$, and $[[6560, 1, 26]]_3$. The first two codes in the list, which have distance $2$, were constructed in \cite{CampbellAnwarBrowne}, but the remaining are new. In the limit $m \to \infty$, the yield parameter of such codes goes to $\gamma \to 3$; but, the yield parameter for these codes may be irrelevant in practice, as one may only need to do a single round of distillation with codes of reasonably large distance (e.g., \cite{fowler2013surface}). For $p=5$, small codes include those with parameters, $[[4, 1, 2]]_5$, $[[24, 1, 3]]_5$, $[[124, 1, 4]]_5$,  and $[[624, 1, 14]]_5$.  For $p=7$, we have $[[6, 1, 2]]_7$, $[[48, 1, 4]]_7$, $[[342, 1, 6]]_7$ and for $p=11$, we have codes $[[10, 1, 4]]_{11}$, $[[120, 1, 7]]_{11}$. The first of these are the Reed-Solomon codes first proposed in \cite{campbellEnhanced}.

As an example, consider the $[[80,1,5]]_3$. The noise reduction is plotted in Figure \ref{fig:80-1-5-threshold}. Its threshold of the code to depolarizing noise, defined via $\rho_M(\delta)=(1-\delta)\ket{M}\bra{M}+\delta \frac{1}{p}$, is $\delta*=0.16$. In general, for qudits of dimension $p$ there are $p-1$ distinct noise channels \cite{CampbellAnwarBrowne} -- we also checked that it distills all types of noise, not just depolarizing noise.

For more than two puncture locations, it appears that one must carry out a randomized or exhaustive search over puncture locations to obtain triorthogonal codes of high distance -- we carry out a modest search in section \ref{sec:other-schemes} similar to that provided for binary codes in \cite{Haah_2018}. Such searches are naturally limited to small codes over small primes $p$. In the next section, we present a puncturing scheme generalizing that used in \cite{Hastings_2018} to all prime dimensions. This scheme, while suboptimal, allows one to compute the distance analytically, for arbitrarily large codes.
\begin{figure}
    \centering
\includegraphics[width=0.5\linewidth]{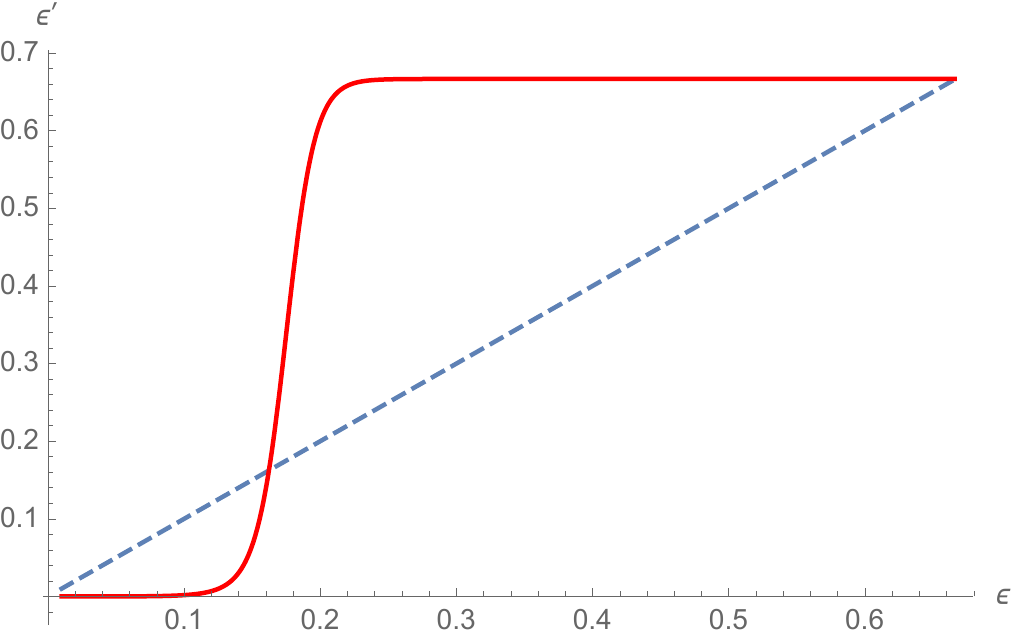}
    \caption{Distillation performance of the $[[80,1,5]]_3$ code, $\epsilon'(\epsilon)$ for depolarizing noise.}
    \label{fig:80-1-5-threshold}
\end{figure}

\section{Sublogarithmic distillation}
\label{sec:sublogarithmic}
In this section, we present an explicit scheme for puncturing triorthogonal spaces constructed from $RM_p(r,m)$ codes that give rise to triorthogonal codes that distill with sublogarithmic overhead for sufficiently large $m$ for all $p$. 

\subsection{Puncturing schemes based on weight functions}
Hastings and Haah \cite{Hastings_2018} devised a puncturing scheme for the binary Reed-Muller code $RM_2(r,m)$ which, although not optimal, allows one to derive explicit combinatorial expressions for the distance of the punctured Reed-Muller code for arbitrary $r$ and $m$. The key idea is puncturing in locations chosen according to the weight of the coordinate $\vec{x} \in \mathbb Z_2^m$ viewed as a binary codeword. One of the main results of the paper is that it is possible to generalize this idea to $RM_p(r,m)$. Unlike the binary case, however, there are many choices of the weight function possible.

Let $\vec{v} \in \mathbb F_p^m$ be a coordinate for $RM_p(r,m)$. A general weight function takes the form 
\begin{equation}
    |\vec{v}|_W = \sum_{i=1}^n W(v_i),
\end{equation} 
where $W: \mathbb F_p \to \mathbb Z$. The simplest example is the Hamming weight, $|\vec{v}|_H$, defined via 
\begin{equation}
    W_H(\alpha) = \begin{cases} 0 & \alpha =0 \\
    1 & \alpha \neq 0 
    \end{cases}.
\end{equation}
Another weight function that has appeared in the literature is, the Lee weight, $|\vec{v}|_L$ defined via,
\begin{equation}
    W_L(\alpha) = \begin{cases} \alpha & \alpha \leq (p-1)/2 \\
    p-\alpha & \alpha > (p-1)/2 
    \end{cases}.
\end{equation}
The Lee weight of a vector $\vec{v}$, is essentially the distance to the origin on a periodic $p\times p \times \ldots \times p$ lattice. 

We introduce a new distance function, which we find is more convenient for our purposes than the previous two weight functions, denoted as $|\vec{v}|_M$, which we call the \textit{Manhattan weight}, defined via,
\begin{equation}
    W_M(\alpha) = \alpha.
\end{equation}
$|\vec{v}|_M$ is the distance to the origin on a $p\times p \times \ldots \times p$ lattice that is not periodic using the Manhattan metric. In both the Lee and Manhattan weight functions, we implicitly convert $\alpha \in \mathbb F_p$ to $\alpha \in \{ 0, 1, 2, \ldots, p-1 \} \subset \mathbb Z$. 

Define a ``generalized binomial coefficient'', $C(m,k;W,p)$, via the generating function 
\begin{equation}
    \left(\sum_{j=0}^{p-1} x^{W(j)}\right)^m = \sum_{k=0}^{\infty} C(m,k;W,p) x^k.
\end{equation}
$C(m,k;W,p)$ counts how many words there are of weight $\tilde{W}(\vec{v})=k$ in the vector space $\mathbb F_p^m$. For Hamming weight, this takes the form
\begin{equation}
    \left(1+(p-1)x\right)^m = \sum_{j=0}^{m} C(m,j;W_H,p) x^j,
\end{equation}
and $C(m,j; W_H, p) = (p-1)^j \binom{m}{j}$.
For the Manhattan weight, 
\begin{equation}
    \left(1+x+x^2 + \ldots + x^{p-1}\right)^m = \sum_{j=0}^{m(p-1)} C(m,j;W_M,p) x^j.
\end{equation}

The puncturing scheme of \cite{Hastings_2018}, in this more general setting, is to puncture all coordinates with weight less than or equal to $w$, for some weight function. We denote the number of such coordinates by
\begin{equation}
C_\leq(m,k; W,p) = \sum_{j=0}^{k} C(m,j; W,p)
\end{equation}
and also for convenience, define,
\begin{equation}
    C_>(m,k;W,p) = \sum_{j=k+1}^{\infty} C(m,j; W,p)= p^m-C_<(m,k; W,p).
\end{equation}
For any weight function, this gives rise to a triorthogonal code with parameters\footnote{This formula, \eqref{triorthogonal-code-parameters}, assumes that the punctured generator matrix for the classical triorthogonal space continues to be of full-rank, which is necessarily true if the number of puncture locations is less than the minimum weight of a codeword in the unpunctured triorthogonal space, i.e., if $C_{\leq}(m,w; W,p) < d_{RM}(r,m)$.}
\begin{equation}
    [[C_>(m,w;W,p), C_\leq(m,w; W,p), d(m,r,w; W,p)]]_p, \label{triorthogonal-code-parameters}
\end{equation}
where $d(m,r,w; W,p)$ is the distance of a punctured Reed-Muller code, and depends non-trivially on both the choice of $W$, and the degree $r$ used to define the Reed-Muller code.

None of these puncturing schemes is optimal. Note, however, that $|\vec{v}|_M \geq |\vec{v}|_L \geq |\vec{v}|_H$, and therefore $C(m,k;W_M,p) \leq C(m,k;W_L,p) \leq C(m,k;W_L,p)$. Therefore, of the weight functions listed above, puncturing according to Manhattan weight will give us the most freedom in choosing the total number of puncture locations -- which can then be tuned to optimize $\gamma$. It turns out that $d(m,r,w; W_M,p)$ can be calculated in this case following the same strategy as \cite{Hastings_2018}. We therefore focus exclusively on this case.

\subsection{Combinatorial tools: $p$-nomial coefficients}
\label{sec:combinatorial-tools}
Let us introduce some combinatorial tools to aid in counting sets of specified Manhattan weight.

We introduce the notation $\binom{m}{s}_p=C(m,s;W_M,p)$, i.e.,  
\begin{equation}
    (1 + x + x^2 + \dots + x^{p-1})^m = \sum_{s=0}^{m(p-1)} \binom{m}{s}_p x^s.
\end{equation} 
We refer to $\binom{m}{j}_p$ as \textbf{$p$-nomial coefficents}.\footnote{ See \cite{weisstein_trinomial_2023} and references therein for discussion of these coefficients for the case $p=3$, which some authors call trinomial coefficients. $p$-nomial coefficients for general $p$ were discussed by Euler in  \cite{euler2005expansionpowerpolynomial1xx2x3x4etc}.}  Notice that when $p=2$, the $p-$nomial coefficients are the familiar binomial coefficients.

Many of the identities satisfied by binomial coefficients generalize immediately to $p-$nomial coefficients. We list some of these below.

Consider,
\begin{align}
        (1 + x +  \dots + x^{p-1})(1 + x +  \dots + x^{p-1})^m & = (1 + x  + \dots + x^{p-1})\sum_{s=0}^{m(p-1)} \binom{m}{s}_p x^s \\
        (1 + x  + \dots + x^{p-1})^{m+1} & = \sum_{s=0}^{(m+1)(p-1)} \binom{m+1}{s}_p x^s.
\end{align}
Comparing the coefficients, we get the following lemma.
\begin{lemma}   \label{eq:genPascal}
    Generalized Pascal's Rule:     \begin{equation}
        \binom{m+1}{s}_p = \sum_{i=0}^{p-1} \binom{m}{s-i}_p.
    \end{equation}
\end{lemma}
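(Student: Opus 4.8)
The statement to prove is the Generalized Pascal's Rule:
\[
\binom{m+1}{s}_p = \sum_{i=0}^{p-1} \binom{m}{s-i}_p.
\]

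The plan is to read off this identity directly from the generating function that \emph{defines} the $p$-nomial coefficients, exactly as the text's setup already suggests. First I would write $(1 + x + \dots + x^{p-1})^{m+1} = (1 + x + \dots + x^{p-1})^{m} \cdot (1 + x + \dots + x^{p-1})$, then substitute the expansion $(1 + x + \dots + x^{p-1})^{m} = \sum_{s} \binom{m}{s}_p x^s$ into the right-hand side, distribute the remaining factor $(1 + x + \dots + x^{p-1}) = \sum_{i=0}^{p-1} x^i$ over the sum, and collect the coefficient of $x^s$ on both sides. On the left, the coefficient of $x^s$ is $\binom{m+1}{s}_p$ by definition; on the right, a term $x^s$ arises from $x^i \cdot x^{s-i}$ for each $i \in \{0,\dots,p-1\}$, contributing $\binom{m}{s-i}_p$, so the coefficient is $\sum_{i=0}^{p-1}\binom{m}{s-i}_p$. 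Equating gives the claim.

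The only bookkeeping point worth stating explicitly is the convention on index ranges: $\binom{m}{t}_p$ is understood to be $0$ whenever $t < 0$ or $t > m(p-1)$, which is consistent with the polynomial expansion having no terms outside that range. With that convention the identity holds for every integer $s$ with $0 \le s \le (m+1)(p-1)$ (and trivially, as $0=0$, outside that range). I would note in passing that this is the verbatim analogue of ordinary Pascal's rule, which it specializes to when $p=2$.

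There is essentially no obstacle here — the result is an immediate consequence of multiplying generating functions and comparing coefficients, and the two displayed equations preceding the lemma statement already carry out the multiplication. The ``hard part,'' such as it is, is purely notational: being careful that the outer sum over $i$ really does range over all $p$ residues $0,\dots,p-1$ (matching the $p$ monomials in $1+x+\dots+x^{p-1}$), and that boundary terms vanish by the zero-convention rather than by any separate argument.
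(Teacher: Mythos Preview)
Your proposal is correct and follows precisely the paper's own approach: the two displayed equations immediately preceding the lemma carry out the multiplication $(1+x+\dots+x^{p-1})^{m+1} = (1+x+\dots+x^{p-1})\cdot (1+x+\dots+x^{p-1})^m$, and the paper concludes by ``comparing the coefficients,'' which is exactly what you do. Your explicit remark about the zero-convention for out-of-range indices is a small clarification the paper leaves implicit.
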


Using the multinomial expansion, the $p-$nomial coeffients can be expressed as a sum of multinomial coefficients:
\begin{lemma}
The coefficient of $x^s$ in $(1 + x + x^2 + \dots + x^{p-1})^m$ is
    \begin{equation}
        \binom{m}{s}_p = \sum_{k_1 + 2 k_2 + \dots + (p-1) k_{p-1} = s; k_i \geq 0} \frac{m!}{k_1! k_2! \dots k_{p-1}!(m-k_1-k_2-\dots-k_{p-1})!}.
    \end{equation}
\end{lemma}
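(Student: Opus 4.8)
The plan is to expand the generating function $(1+x+x^2+\dots+x^{p-1})^m$ directly using the multinomial theorem and then collect terms by the power of $x$. Writing the base as a sum of $p$ monomials $1, x, x^2, \dots, x^{p-1}$, I would apply the multinomial expansion for an $m$-th power of a sum of $p$ terms: a general term of the expansion is indexed by nonnegative integers $(k_0, k_1, \dots, k_{p-1})$ with $k_0 + k_1 + \dots + k_{p-1} = m$, and contributes
\begin{equation*}
\frac{m!}{k_0! \, k_1! \cdots k_{p-1}!}\; (x^0)^{k_0} (x^1)^{k_1} \cdots (x^{p-1})^{k_{p-1}} = \frac{m!}{k_0! \, k_1! \cdots k_{p-1}!}\; x^{\,k_1 + 2k_2 + \dots + (p-1)k_{p-1}}.
\end{equation*}

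Next I would identify the exponent: the monomial $x^j$ appears with coefficient $k_j$ in the exponent sum, so the total power of $x$ in this term is $s := \sum_{j=1}^{p-1} j\, k_j$ (the $j=0$ term contributes nothing). Hence, summing over all compositions $(k_0,\dots,k_{p-1})$ of $m$ with $\sum_{j=1}^{p-1} j k_j = s$ collects exactly the coefficient of $x^s$. Finally I would eliminate $k_0$ by the constraint $k_0 = m - k_1 - k_2 - \dots - k_{p-1}$, which is automatically nonnegative precisely when $k_1 + \dots + k_{p-1} \leq m$; substituting this into the multinomial coefficient gives the stated formula
\begin{equation*}
\binom{m}{s}_p = \sum_{\substack{k_1 + 2k_2 + \dots + (p-1)k_{p-1} = s \\ k_i \geq 0}} \frac{m!}{k_1! k_2! \cdots k_{p-1}!\,(m - k_1 - \dots - k_{p-1})!},
\end{equation*}
where one should note that any tuple violating $k_1+\dots+k_{p-1}\le m$ does not actually arise (equivalently, one may regard such terms as having a vanishing contribution, since the sum over all compositions of $m$ is the honest index set).

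There is essentially no obstacle here; this is a direct bookkeeping argument. The only point requiring a word of care is making sure the index set in the final sum is consistent: the cleanest phrasing is to sum over $(k_0, k_1, \dots, k_{p-1})$ with $\sum k_j = m$ and $\sum_{j\ge 1} j k_j = s$, and then observe that specifying $(k_1,\dots,k_{p-1})$ with $\sum_{j\ge1} j k_j = s$ determines $k_0$ uniquely, with $k_0 \ge 0$ following from $s \le m(p-1)$ together with $k_j \le s/j$. I would present it in two lines: state the multinomial expansion, then extract the coefficient of $x^s$.
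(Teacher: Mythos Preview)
Your proposal is correct and matches the paper's approach exactly: the paper does not give a proof beyond the one-line remark ``Using the multinomial expansion, the $p$-nomial coefficients can be expressed as a sum of multinomial coefficients,'' and your argument is precisely the standard bookkeeping that fills this in. One small slip: the claim that $k_0 \ge 0$ follows from $s \le m(p-1)$ and $k_j \le s/j$ is not quite right (e.g.\ $p=3$, $m=2$, $s=3$, $(k_1,k_2)=(3,0)$), but you already handled this correctly earlier by noting that the honest index set is compositions of $m$, or equivalently that terms with $k_1+\dots+k_{p-1}>m$ are interpreted as zero via the negative-argument factorial.
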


In what follows, it is also convenient to introduce the following short-hand notation, following \cite{Hastings_2018},
\begin{defn}
    We define
    \begin{align}
        \binom{m}{>s}_p = \sum_{i>s} \binom{m}{i}_p \text{ and } \binom{m}{\leq s}_p = \sum_{i \leq s} \binom{m}{i}_p.
    \end{align}
\end{defn}
Note that $\binom{m}{>s}_p \geq 0$ $\binom{m}{\leq s}_p \geq 0$ $\binom{m}{s}_p \geq 0$, and are only defined for positive $m$, $s$.

It is straightforward to derive the following lemma.
\begin{lemma}
    \begin{equation}
        \binom{m}{>k}_p+\binom{m}{\leq k}_p=p^m.
    \end{equation}
\end{lemma}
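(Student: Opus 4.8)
The plan is to prove the identity
\[
\binom{m}{>k}_p+\binom{m}{\leq k}_p=p^m
\]
directly from the generating-function definition of the $p$-nomial coefficients, together with the elementary observation that the total number of vectors in $\mathbb F_p^m$ is $p^m$ and each such vector has some well-defined Manhattan weight in the range $0,1,\dots,m(p-1)$.

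First I would recall that, by definition,
\[
(1+x+x^2+\dots+x^{p-1})^m=\sum_{s=0}^{m(p-1)}\binom{m}{s}_p x^s.
\]
Substituting $x=1$ into both sides, the left-hand side becomes $p^m$, while the right-hand side becomes $\sum_{s=0}^{m(p-1)}\binom{m}{s}_p$. Hence $\sum_{s\ge 0}\binom{m}{s}_p=p^m$. (Equivalently, this is the statement that $C(m,s;W_M,p)$ counts words of Manhattan weight $s$ in $\mathbb F_p^m$, so summing over all $s$ counts every vector exactly once.) Then I would split the sum at the threshold $k$: since $\binom{m}{i}_p=0$ for $i<0$ and for $i>m(p-1)$, we have
\[
\sum_{s\ge 0}\binom{m}{s}_p=\sum_{i\le k}\binom{m}{i}_p+\sum_{i>k}\binom{m}{i}_p=\binom{m}{\leq k}_p+\binom{m}{>k}_p,
\]
which gives the claim upon combining with the previous line.

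There is essentially no obstacle here; the only thing to be careful about is that the definitions $\binom{m}{>s}_p$ and $\binom{m}{\leq s}_p$ are stated only for positive $m,s$, and that the two partial sums really do partition the full index set $\{0,1,\dots,m(p-1)\}$ with no overlap and no omission — this is immediate since every nonnegative integer is either $\le k$ or $>k$ but not both. One could equally present the argument purely combinatorially: $p^m=|\mathbb F_p^m|$, every $\vec v\in\mathbb F_p^m$ has a unique Manhattan weight $|\vec v|_M$, so partitioning $\mathbb F_p^m$ according to whether $|\vec v|_M\le k$ or $|\vec v|_M>k$ yields $p^m=\binom{m}{\leq k}_p+\binom{m}{>k}_p$. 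I would likely include the one-line generating-function version as the cleanest route.
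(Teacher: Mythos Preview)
Your proposal is correct and is precisely the kind of straightforward argument the paper has in mind; the paper itself does not spell out a proof, remarking only that the lemma is ``straightforward to derive,'' and your generating-function evaluation at $x=1$ (or equivalently the combinatorial partition of $\mathbb F_p^m$ by Manhattan weight) is the natural way to fill that in.
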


In Appendix \ref{asymptotic-appendix}, we discuss the asymptotic form of the $p-$nomial coefficients, $\lim_{m \to \infty} \frac{1}{m} \log_p {\binom{m}{m \theta}_p}=H_p(\theta)$.  

\subsection{Distance of the punctured Reed-Muller code}

We now compute the minimum distance of the classical Reed-Muller code, punctured according to the Manhattan weight scheme.

Reed-Muller codes are defined as the evaluations of polynomials in $\mathbb{F}_p[x_1, x_2, \dots, x_m]/\langle x_i^p-x_i \rangle$ over the vectors $\vec{x} \in \mathbb{F}_p^m$. Puncturing this code is equivalent to removing a subset $S \subset \mathbb{F}_p^m$ of points from $\mathbb{F}_p^m$. For our construction, we choose this subset to be $S_p(w) = \{v \in \mathbb{F}_p^m : |v|_M \leq w \}$, and we call the punctured code $PRM_p(r, m, w)$. 

Each function $f \in \mathbb{F}_p[x_{1}, x_{2}, \dots, x_{m}] / \langle x_{i}^p - x_{i} \rangle$ with total degree less than $r$ corresponds to a codeword in $PRM_p(r,m,w)$. Let $|f|_{>w}$ denote the support of $f$ in $\mathbb{F}_p^m \setminus S_p(w)$. $|f|_{>w}$ is the Hamming weight of the codeword corresponding to $f$ in $PRM_p(r,m,w)$. The function with smallest non-zero $|f|_{>w}$ determines the minimum distance of the code. 

\begin{theorem}
    The distance of the classical punctured Reed-Muller code, $PRM_p(r, m, w)$, is \begin{equation}
    \begin{split}
      \Delta_p(m, r, w) & = \sum_{j=0}^{p-\beta-1} \binom{m-\alpha-1}{>w-j}_p  \\
      &  = \begin{cases}
        \binom{m-\alpha}{>w}_p & \beta=0 \\
        \binom{m-\alpha}{>w}_p - \sum_{j=p-\beta}^{p-1} \binom{m-\alpha -1}{>w-j}_p & \beta \neq 0, ~\alpha \leq m-1.
    \end{cases}
    \end{split}     
    \end{equation} 
    where $r=\alpha (p-1) + \beta \leq m(p-1)$ with $\beta \in \{ 0, 1, \dots, p-2 \}$, assuming that $\Delta_p(m,r,w) > 0$. \label{distance-theorem}
\end{theorem}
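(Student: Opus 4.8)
The plan is to mimic the strategy of Hastings–Haah for the binary case, adapted to the Manhattan weight. The quantity to compute is $\Delta_p(m,r,w) = \min_f |f|_{>w}$ over nonzero $f \in \mathbb{F}_p[x_1,\dots,x_m]/\langle x_i^p - x_i\rangle$ of total degree at most $r$, where $|f|_{>w}$ counts the points $\vec{x}$ with $|\vec{x}|_M > w$ at which $f(\vec{x}) \neq 0$. First I would reduce to exhibiting an \emph{explicit} candidate polynomial $f^*$ achieving the claimed value, and then prove a matching lower bound showing no degree-$r$ polynomial does better. For the upper bound, I would write $r = \alpha(p-1) + \beta$ and take $f^*$ to be a product of $\alpha$ factors of the form $(1 - x_i^{p-1})$ (each of degree $p-1$, each vanishing exactly on the hyperplane $x_i \neq 0$... more precisely $1-x_i^{p-1}$ is the indicator of $x_i = 0$), times one more factor of degree $\beta$ that further restricts the remaining coordinate — something like a product $\prod_{j=1}^{\beta}(x_{\alpha+1} - c_j)$ for suitable constants $c_j$ chosen to kill the small-$W_M$-value choices of that coordinate. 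The support of $f^*$ then lives on vectors with $x_1 = \dots = x_\alpha = 0$ and $x_{\alpha+1} \in \{$some set of size $p-\beta\}$; counting those with total Manhattan weight exceeding $w$ gives exactly $\sum_{j=0}^{p-\beta-1}\binom{m-\alpha-1}{>w-j}_p$ once one sums over the allowed values $j$ of $W_M(x_{\alpha+1})$ and uses that the remaining $m-\alpha-1$ coordinates are free.

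For the lower bound — the crux of the argument — I would use the affine-invariance of $RM_p(r,m)$ together with an inductive/peeling argument on $\alpha$. Given an arbitrary nonzero $f$ of degree $r$, one wants to show $|f|_{>w} \geq \Delta_p(m,r,w)$. The key structural fact is that after an affine change of coordinates one may assume $f$ has a particularly convenient form; then either $f$ does not depend on $x_1$, in which case we recurse on $m-1$ variables with $w$ unchanged, or it genuinely depends on $x_1$ and one decomposes $f = \sum_{t=0}^{p-1} x_1^t g_t(x_2,\dots,x_m)$ and bounds the support below by summing contributions over the $p$ hyperplanes $x_1 = c$, using the Schwartz–Zippel-type distance bound (the Theorem quoted above) on the restrictions, and tracking how the weight budget $w$ shifts by $W_M(c)$ on each slice. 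The arithmetic of the Manhattan weight — that $W_M$ takes all values $0,1,\dots,p-1$ — is what makes the bookkeeping match the $p$-nomial recurrence (Generalized Pascal's Rule, Lemma \ref{eq:genPascal}): summing $\binom{m-1}{>w-j}_p$ over $j = 0,\dots,p-1$ reproduces $\binom{m}{>w}_p$, and the $\beta \neq 0$ correction term is precisely the deficit from the top $\beta$ values of that coordinate being unavailable.

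I would organize the proof as: (1) reduce, via affine invariance, to analyzing $f$ in a normal form where the leading behavior in $x_1$ is controlled; (2) establish the recursion $\Delta_p(m,r,w) \geq \min$ over ways of splitting the degree between "using up a factor of degree $p-1$ in one variable" and "spreading degree across a slice," formalized as a sum over coset representatives $c \in \mathbb{F}_p$ of the restricted weight counts; (3) check the recursion's base cases ($\alpha = 0$, where $f$ has degree $\beta < p-1$ and Schwartz–Zippel directly gives the count $\binom{m}{>w}_p$ minus the top-$\beta$ correction) and solve it to obtain the closed form; (4) verify the upper-bound polynomial $f^*$ attains it; (5) note the two displayed expressions agree by applying Generalized Pascal's Rule to rewrite $\binom{m-\alpha}{>w}_p = \sum_{j=0}^{p-1}\binom{m-\alpha-1}{>w-j}_p$ and isolating the $j \leq p-\beta-1$ terms. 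The main obstacle I anticipate is step (2): making the induction on the number of "saturated" variables genuinely rigorous — in particular, ruling out that some clever polynomial not of product form beats the bound by having its support cleverly correlated with the low-weight region. The Manhattan weight is helpful here precisely because it is a sum of single-coordinate weights with no wraparound, so the weight constraint decouples across coordinates and the slicing argument closes cleanly; I would lean on that decoupling heavily, and expect the binary argument of \cite{Hastings_2018} to transfer essentially verbatim once the $p$-nomial analogues of the binomial identities are in place.
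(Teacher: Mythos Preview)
Your overall architecture matches the paper's: an explicit product polynomial for the upper bound, and a slicing induction for the lower bound, with the generalized Pascal's rule gluing the slice counts together. Two points deserve comment.

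First, your step (1) --- putting $f$ into a normal form via the affine invariance of $RM_p(r,m)$ --- does not work here and the paper does not attempt it. The punctured region $S_p(w) = \{\vec{x} : |\vec{x}|_M \le w\}$ is not invariant under affine transformations of $\mathbb{F}_p^m$, so conjugating $f$ by an affine map scrambles which coordinates count as ``punctured.'' The paper's lower-bound argument never normalizes $f$; it simply inducts on $m$ (not on $\alpha$, as you suggest), writing $f^{(m+1)} = \sum_{t} f_t^{(m)} x_{m+1}^t$ and applying the induction hypothesis directly to each restriction $f^{(m+1)}|_{x_{m+1}\to z}$, whose punctured weight is $|\,\cdot\,|_{>w - z}$ by the additivity of the Manhattan weight.

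Second, the obstacle you anticipate in step (2) is real and is precisely where the nontrivial work lies. If every restriction $f^{(m+1)}|_{x_{m+1}\to z}$ is nonzero, the induction closes immediately via Pascal. The delicate case is when $A$ of the $p$ restrictions vanish identically: then $f^{(m+1)} = \prod_{j=1}^{A}(x_{m+1}-z_j)\, h^{(m+1)}$ with $\deg h^{(m+1)} = r - A$, and one must show that summing the slice bounds $\Delta_p(m, r-A, w-i)$ over the $p-A$ \emph{surviving} values of $i$ still dominates $\Delta_p(m+1,r,w)$. This is a purely combinatorial inequality on $p$-nomial sums (the paper proves it as a separate lemma, splitting into the cases $\beta \ge A$ and $\beta < A$), and it does not fall out of Pascal's rule alone --- so ``the binary argument transfers verbatim'' understates what is needed. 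Once that inequality is in hand, the induction closes and your outline is complete.
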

\begin{proof}
We will show that any nonzero function of total degree $r$ in $\mathbb F_p[x_1, \ldots, x_m]/\langle x_i^p-x_i\rangle$  has support  in $\mathbb F_p^m \setminus  S_p(w)$ greater than or equal to $\Delta_p(m,r,w)$, and that there is a function that saturates this inequality. 
If $\Delta_p(m,r,w)>0$, then $\Delta_p(m,r,w)$ is the minimum distance of the punctured Reed Muller code, as per the discussion above.

Consider the polynomial
\begin{equation}
    m_0(\vec x)=\prod_{i=1}^{\alpha}(1-x_i^{p-1}) \cdot \prod_{j=0}^{\beta-1} (p-1-i-x_{\alpha+1}) \in RM_p(r, m).
\end{equation}
This polynomial has degree $r$ and is thus a codeword in $PRM_p(r,m,w)$. 

Let us show that $|m_0(\vec{x})|_{>w}=\Delta_p(m,r,w)$, which is the number of points in $\mathbb{F}_p^m \setminus S_p(w)$ where $m_0(\vec{x}) \neq 0$. From the form of $m_0(\vec{x})$, we see that any such point must have $x_1 = x_2 = \dots = x_{\alpha}=0$ and $x_{\alpha+1} \in \{ 0, 1,  \ldots, p-\beta-1 \}$. Thus, the weight of this codeword is the number of solutions to $x_{\alpha+1} + x_{\alpha+2} + \dots + x_m > w$ where $x_i \in \{ 0, 1, \dots, p-1 \}$ for $i > \alpha + 1$. This is given by $\sum_{j=0}^{p-\beta-1} \binom{m-\alpha-1}{>w-j}_p=\Delta_p(m,r,w)$.

We will now prove that no function $f$ in $PRM_p(r,m,w)$ has $|f|_{>w}<\Delta_p(m,r,w)$. We will proceed by induction on the number of variables $m$, recalling that the degree must satisfy, $r=\alpha (p-1)+\beta\leq m(p-1)$.  The base case of $m=1$ (for which $r \leq p-1$) is trivial to show. Let $f^{(m)}$ be any polynomial in $m$ variables such that $\deg(f) \leq \min(r , m(p-1))$. We assume, for the inductive argument, that $|f^{(m)}(x_{1}, x_{2}, \dots, x_{m})|_{>w} \geq \Delta_{p}(m, r, w)$.  We now show that this assumption implies that any polynomial $f^{(m+1)}$ in $m+1$ variables, with $\deg(f^{(m+1)}) \leq \min(r, (m+1)(p-1))$, then  $|f^{(m+1)}|_{>w} \geq \Delta_{p}(m+1, r, w)$. 

Note that, if $r=(m+1)(p-1)$, the result is trivial, since $\Delta_p(m+1, (m+1)(p-1),w)=0$. So we assume $\deg(f^{(m+1)})=r<(m+1)(p-1)$ in what follows. (This does not, however, imply that $r<m(p-1)$.)
Any $f^{(m+1)}(x_{1}, x_{2}, \dots, x_{m+1})$ can be written as
\begin{equation}
    f^{(m+1)} = f^{(m)}_{0} + f^{(m)}_{1} x_{m+1} + f^{(m)}_{2} x_{m+1}^2 + \dots + f^{(m)}_{p-1} x_{m+1}^{p-1}
\end{equation}
where $\deg(f^{(m)}_{i}) \leq \min(r-i, m(p-1))$. Thus, summing over all values of $x_{m+1}$, we have,
\begin{equation}
\begin{split}
    |f^{(m+1)}|_{>w}  = & \left|f^{(m)}_{0}\right|_{>w} + \left|f^{(m)}_{0} + f^{(m)}_{1} + \dots + f^{(m)}_{p-1}\right|_{>w-1} \\ & + \left|f^{(m)}_{0} + 2 f^{(m)}_{1} + \dots + 2^{p-1}f^{(m)}_{p-1}\right|_{>w-2} + \dots \\ & + \left|f^{(m)}_{0} - f^{(m)}_{1} + f^{(m)}_{2} - \dots + f^{(m)}_{p-1}\right|_{>w-(p-1)}. \label{decomposition}
\end{split}
\end{equation}
We denote each of the terms in the above equation as,
\begin{equation}
f^{(m+1)}\big|_{x_{m+1}\to z}=f^{(m)}_0+f^{(m)}_1 z+ f^{(m)}_2 z^2 + \ldots f^{(m)}_{p-1}z^{p-1}. 
\end{equation}
(So, e.g., $f^{(m+1)}\big|_{x_{m+1}\to 0}=f^{(m)}_0$, and $f^{(m+1)}\big|_{x_{m+1}\to 2}=f^{(m)}_{0} + 2 f^{(m)}_{1} + \dots + 2^{p-1}f^{(m)}_{p-1}$.)

Let us first consider 
\begin{itemize}
 
\item Case 1:  $\forall z \in \mathbb F_p$, $f^{(m+1)}\big |_{x_{m+1}\to z}$ does not identically vanish.
    
Each of the terms $f^{(m+1)}\big |_{x_{m+1}\to z}$ in eq. \eqref{decomposition} has $m$ variables and  degree $r^* \leq \min(r,m(p-1))$. 

Then, by the induction hypothesis, 
\begin{align}
    |f_{0}|_{>w} &\geq \Delta_{p}(m, r^*, w) \geq \Delta_{p}(m, r, w) \\
    |f_{0} + f_{1} + \dots + f_{p-1}|_{>w-1} &\geq \Delta_{p}(m, r^*, w-1)  \geq \Delta_{p}(m, r, w-1) \\
    &\vdots \\
    |f_{0} - f_{1} + f_{2} - \dots + f_{p-1}|_{>w-(p-1)} &\geq \Delta_{p}(m, r^*, w-(p-1))  \geq \Delta_{p}(m, r, w-(p-1)).
\end{align}
Above, we used the fact that $\Delta_{p}(m, r, w)$ is a decreasing function of $r$, and $r^* \leq r$. 

Then, using the \hyperref[eq:genPascal]{generalized Pascal's rule}, we have
\begin{equation}
    |f|_{>w} \geq \sum_{i=0}^{p-1} \Delta_{p}(m, r, w-i) =\Delta_{p}(m+1, r, w).
\end{equation}

\item Case 2: More generally, if $f^{(m+1)}\big|_{x_{m+1}\to z_j} = 0$ for some $\{ z_j\}$, for $j=1, \ldots A$, then $$f^{(m+1)}=(x_{m+1}-z_1)(x_{m+1}-z_2)\cdots (x_{m+1}-z_A) h^{(m+1)},$$ with $h^{(m+1)} \neq 0$. Note that $h^{(m+1)}$ may still depend on $x_{m+1}$, but $\deg h^{(m+1)}=\min(r-A,(m+1)(p-1))=r-A$. We count the size of the support of $h^{(m+1)}$ as in Case 1, but we should not count any points for which $x_{m+1}=z_j$, for any $z_j$. Then, we have
\begin{equation}
\begin{split}
    |f^{(m+1)}|_{>w} & \geq \sum_{i=0 ,~ i \notin \{z_1, \ldots, z_A\} }^{p-1} \Delta_{p}(m,r-A, ~w-i)  \geq \sum_{i=0}^{p-1-A} \Delta_{p}(m, r-A, w-i).
\end{split}    
\end{equation}
Using Lemma \ref{distance-lemma-2},  in Appendix \ref{app:p-nomial}, we have
\begin{equation}
    \sum_{i=0}^{p-1-A} \Delta_{p}(m, r-A, w-i) \geq  \Delta_{p}(m+1, r, w).
\end{equation}
Therefore, again $|f|_{>w} \geq  \Delta_{p}(m+1, r', w)$ as required. 
\end{itemize}

\end{proof}

Note that this proof also implies that the generator matrix for $PRM_p(r,m,w)$ is full rank, if $\Delta_p(m,r,w) > 0$. In those cases where $\Delta_p(m,r,w)=0$, the punctured generator matrix is not of full rank, and this theorem does not determine the distance of the punctured code. 

\subsection{A family of quantum Reed-Muller codes}
\begin{theorem}
    Puncturing $RM_p(r,m)$ codes according to Manhattan weight gives rise to a family of qudit (with prime dimension $p$) triorthogonal codes with parameters $[[\binom{m}{>w}_p, \binom{m}{\leq w}_p, \Delta_p(m, \tilde{r}, w)]]_p$ with $\tilde{r}=m(p-1)-r-1$, $3 r < m(p-1)$ and $w < m (p-1) - r$, if $\Delta_p(m,r,w) > 0$. \label{code-parameters}
\end{theorem}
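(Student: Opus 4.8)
The plan is to assemble Theorem \ref{code-parameters} from the pieces already established, treating it as essentially a bookkeeping corollary. First I would invoke the construction from Section \ref{sec:triorthogonal-codes}: given a classical triorthogonal space with generator matrix $G$ punctured in a set $S$ of locations, one obtains a triorthogonal code $CSS(G_0 \to X, G'^\perp \to Z)$ with length $n_0 - |S|$, dimension $|S|$ (assuming the surviving submatrix has full rank), and distance $d(G_0^\perp)$ in the non-degenerate case. The space we feed in is $RM_p(r,m)$, which by the Theorem following Lemma \ref{dual-thm} is classically triorthogonal precisely when $3r < m(p-1)$; this is one of the hypotheses, so that check is immediate. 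The puncture set is $S = S_p(w) = \{v \in \mathbb{F}_p^m : |v|_M \leq w\}$, whose cardinality is, by definition of the $p$-nomial coefficients and the notation $\binom{m}{\leq w}_p$, exactly $\binom{m}{\leq w}_p$. Hence the dimension of the code is $\binom{m}{\leq w}_p$ and, since $|\mathbb{F}_p^m| = p^m = \binom{m}{>w}_p + \binom{m}{\leq w}_p$, the length is $p^m - \binom{m}{\leq w}_p = \binom{m}{>w}_p$. That settles the first two code parameters.

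For the distance, I would apply Theorem \ref{general-distance-theorem}: the distance of the triorthogonal code built by puncturing the $RM_p(r,m)$ triorthogonal space at $S$ is the distance of the punctured Reed-Muller code $PRM_p(\tilde r, m; S)$ with $\tilde r = m(p-1) - r - 1$. Specializing $S$ to $S_p(w)$, this is $PRM_p(\tilde r, m, w)$ in the notation of Section \ref{sec:sublogarithmic}, and Theorem \ref{distance-theorem} evaluates its distance to be $\Delta_p(m, \tilde r, w)$ — valid provided $\Delta_p(m, \tilde r, w) > 0$, which is where the hypothesis on positivity enters (note the theorem statement as written refers to $\Delta_p(m,r,w)>0$, but the relevant quantity for the quantum code's distance is $\Delta_p(m,\tilde r, w)$, and since $\tilde r$ can be smaller or larger than $r$ one should be careful here — I would state the positivity hypothesis in terms of $\Delta_p(m,\tilde r, w)$). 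The condition $w < m(p-1) - r = \tilde r + 1$, i.e. $w \leq \tilde r$, ensures $S_p(w)$ is a sensible (non-exhaustive) puncture set and feeds into the combinatorial identities; I would note that for the distance formula of Theorem \ref{distance-theorem} applied with degree $\tilde r$ the relevant constraint is automatically met.

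The one genuine subtlety — and the step I would flag as needing the most care rather than the most work — is the full-rank claim needed for the length/dimension count to be exact. Theorem \ref{distance-theorem}'s proof, as remarked immediately after it, shows the generator matrix of $PRM_p(\tilde r, m, w)$ is full rank whenever $\Delta_p(m, \tilde r, w) > 0$; but what the code-construction requires is that the \emph{punctured} generator matrix of the original $RM_p(r,m)$ triorthogonal space (equivalently $G'$ in the shortened/punctured decomposition) retains full rank, and that the split into $G_1$ over $G_0$ behaves as claimed. I would argue this reduces to the same statement: $G'$ is the generator matrix of $PRM_p(r,m,w)$, and the footnote to \eqref{triorthogonal-code-parameters} already records that full rank of this matrix holds when the number of punctures is below the minimum distance of the unpunctured triorthogonal space, $\binom{m}{\leq w}_p < d_{RM}(r,m)$; alternatively, when $\Delta_p(m,r,w)>0$ one gets full rank of $PRM_p(r,m,w)$ directly from (the degree-$r$ instance of) Theorem \ref{distance-theorem}. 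So I would either add $\Delta_p(m,r,w)>0$ as an explicit hypothesis (it is there) or cite the footnote, and then the parameter count $[[\binom{m}{>w}_p, \binom{m}{\leq w}_p, \Delta_p(m,\tilde r, w)]]_p$ follows by assembling Theorems \ref{general-distance-theorem} and \ref{distance-theorem}. The remaining inequalities ($3r < m(p-1)$ for triorthogonality, $w < m(p-1)-r$ for the puncture set to be proper) are hypotheses carried along verbatim, so no further argument is needed.
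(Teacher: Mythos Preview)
Your proposal is correct and follows essentially the same route as the paper's own proof: count the puncture set to get $n$ and $k$, invoke Theorem \ref{general-distance-theorem} and then Theorem \ref{distance-theorem} for the distance, and use the remark after Theorem \ref{distance-theorem} that $\Delta_p(m,r,w)>0$ forces $PRM_p(r,m,w)$ to have full rank so that $k$ really equals the number of punctures. Your observation that the distance formula from Theorem \ref{distance-theorem} at degree $\tilde r$ technically requires $\Delta_p(m,\tilde r,w)>0$, whereas the stated hypothesis is $\Delta_p(m,r,w)>0$, is a valid caveat about the theorem statement rather than a flaw in the argument; the paper glosses over this point.
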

\begin{proof}
We have already shown that $n$ is given by $\binom{m}{>w}_p$. The number of puncture locations is $\binom{m}{\leq w}_p$; this is equal to $k$ if the punctured triorthogonal matrix is of full rank, which is guaranteed by Theorem \ref{distance-theorem}, if $\Delta_p(m,r,w) > 0$. From Theorem \ref{general-distance-theorem}, the distance of the quantum triorthogonal code is  $d(\mathcal G_0^\perp) = d(PRM_p(m(p-1)-r-1, m, w))$, which is given by Theorem \ref{distance-theorem}.
\end{proof}

Let us study the overhead of this code in the asymptotic limit of large $m$ when $r$ is maximal. To facilitate the analysis, we choose $m=3 \alpha$, where $\alpha$ is a large positive integer. Then, $r=\alpha (p-1) - 1$ is the largest value of $r$ such that $3r<m (p-1)$. For this choice of parameters, the distance of our quantum code simplifies to \begin{equation}
    d=\binom{\alpha}{>w}_p,
\end{equation} 
and the yield parameter is given by,
\begin{equation}
   \gamma = \frac{\log(n/k)}{\log(d)} = \frac{\log \pqty{\binom{3 \alpha}{>w}_p} - \log \pqty{\binom{3 \alpha}{\leq w}_p}}{\log \pqty{\binom{\alpha}{>w}_p)}}.
\end{equation}

Let us now study the limit of this expression when $\alpha \to \infty$, keeping the ratio $\frac{w}{(p-1)m} = t$ held fixed. The parameter $t$ ranges from $0$ to $1$. The answers can be expressed in terms of ${H}_p(\theta) = \lim_{m \to \infty} \binom{m}{\theta m}_p$ as shown in Section \ref{sec:combinatorial-tools} and Appendix \ref{asymptotic-appendix}, and is 
\begin{equation}
    \gamma_0(\theta) = \lim_{m \to \infty} \frac{\log_p \binom{m}{>w}_p - \log_p \binom{m}{\leq w}_p}{\log_p \binom{m/3}{>w}_p}  \to \begin{cases} 3 ({1-H_p(\theta)}) & \theta< \frac{p-1}{6} \\  3 \frac{1-H_p(\theta)}{H_p(3\theta)} & \theta>\frac{p-1}{6}.
    \end{cases} \label{gamma-equation-2}
\end{equation}
where $\theta=t(p-1)$. We extremize $\gamma(t)$ as a function of $t$ to find the optimal yield parameter.

Table \ref{tab:asymptotic-yield} gives numerical values of asymptotic $\gamma_0(p)$ with $p$ and their corresponding $t_0=\theta_0(p)/(p-1)$ that minimizes $\gamma_p(\theta)$. A sample computation is illustrated in Appendix \ref{asymptotic-appendix}. We also show in Appendix \ref{asymptotic-appendix} that $\gamma_0(p) \sim \frac{1}{\ln p}$ for large $p$.
\begin{table}[!htb]
\begin{minipage}[t]{0.3\textwidth}\vspace{0pt}%
    \centering
    \begin{tabular}{|c|c|c|}
        \hline
        \( p \) & \( \gamma_0(p) \) & \( t_0(p) \) \\
        \hline
         2  & 0.67799 & 0.27063 \\
3  & 0.63215 & 0.27369 \\
5  & 0.55914 & 0.27868 \\
7  & 0.50786 & 0.28230 \\
11 & 0.44108 & 0.28720 \\
13 & 0.41785 & 0.28896 \\
17 & 0.38273 & 0.29169 \\
19 & 0.36901 & 0.29278 \\
23 & 0.34663 & 0.29459 \\
        \hline
    \end{tabular}
    \caption{Asymptotic values of the yield parameter $\gamma_0(p)$.}
    \label{tab:asymptotic-yield}
\end{minipage}
\hfill
\begin{minipage}[t]{0.65\textwidth}\vspace{0pt}%
    \begin{tabular}{|c|c|c|c|c|}
        \hline
        $p$ & $m$ & $w$ & \textbf{Code Parameters} & \textbf{Length} \\
        \hline
        2 & 58 & 14 & $[[288215893050995568, 14483100716176, 21700]]_2$ & $\sim 2^{58}$ \\
        3 & 32 & 16 & $[[1852445880782154, 574308069687, 3510]]_3$ & $\sim 2^{51}$ \\
        5 & 16 & 16 & $[[152186472515, 401418110, 429]]_5$ & $\sim 2^{37}$ \\
        7 & 13 & 20 & $[[96448935471, 440074936, 231]]_7$ & $\sim 2^{36}$ \\
        11 & 7 & 19 & $[[18874416, 612755, 35]]_{11}$ & $\sim 2^{24}$ \\
        13 & 7 & 23 & $[[60848853, 1899664, 35]]_{13}$ & $\sim 2^{26}$ \\
        17 & 4 & 17 & $[[77540, 5981, 15]]_{17}$ & $\sim 2^{16}$ \\
        19 & 4 & 19 & $[[121470, 8851, 15]]_{19}$ & $\sim 2^{17}$ \\
        23 & 1 & 5 & $[[17, 6, 3]]_{23}$ & $\sim 2^{4~}$ \\
        \hline
    \end{tabular}
    \caption{Smallest punctured Reed-Muller codes with $\gamma < 1$ arising from our construction for different values of the qudit dimensionality $p$. For $p\geq 23$, punctured Reed-Solomon codes~\cite{Krishna_2019} achieve $\gamma<1$. Results for $p=2$ are from \cite{Hastings_2018}.}
    \label{tab:smallest_codes}
\end{minipage}
\end{table}

One can also ask, what are the smallest codes arising from this construction that have $\gamma<1$? These are are listed in Table \ref{tab:smallest_codes} and their block sizes are plotted as a function of $p$ in Figure \ref{fig:block-size}. We see that the required block size decreases dramatically as $p$ increases, until $p=23$, when Reed-Solomon codes achieve sublogarithmic distillation. However, in the following section, we show that one can construct punctured Reed-Muller codes that give rise to triorthogonal codes with $\gamma<1$ with much smaller block sizes, if one is willing to search for better puncture locations.  

\begin{figure}
    \centering
    \includegraphics[width=0.5\linewidth]{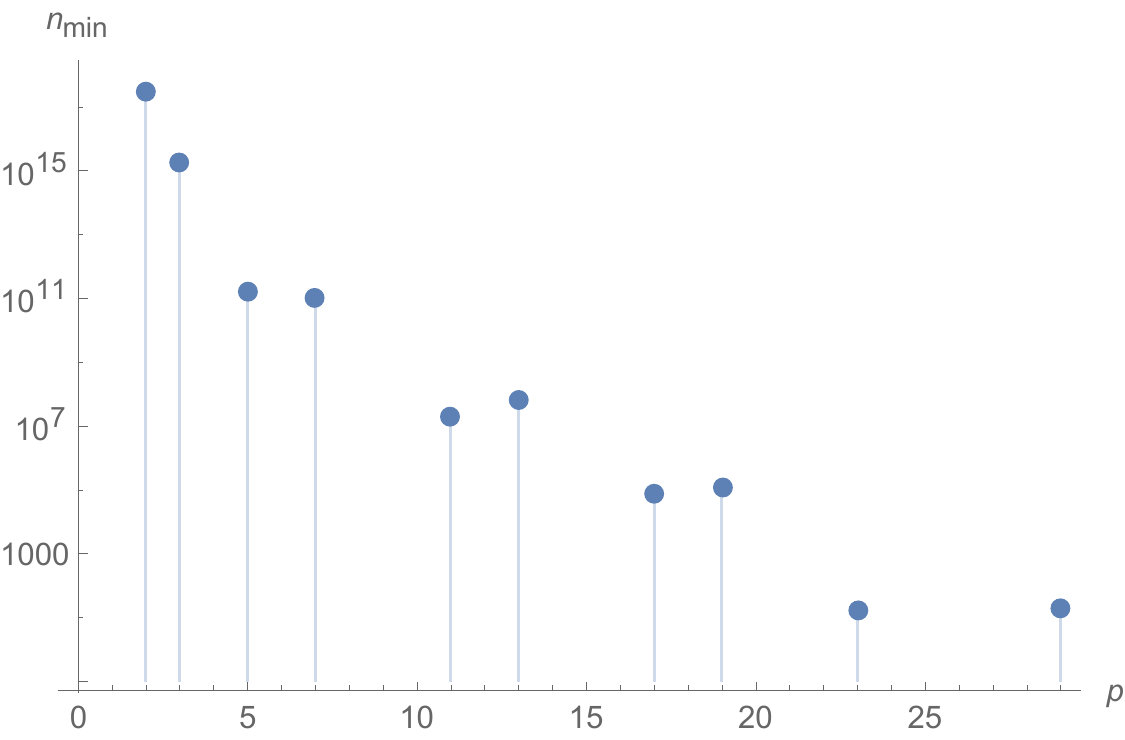}
    \caption{The minimum block size required for sublogarithmic distillation via our construction as a function of $p$. The code for $p=2$ is from \cite{Hastings_2018}, and the vertical scale is logarithmic.}
    \label{fig:block-size}
\end{figure}
\section{Sublogarithmic distillation with $\sim 500$ ququints}
\label{sec:other-schemes}

The puncturing scheme by Manhattan weight used in the earlier subsection was constructed by direct analogy from the scheme used in \cite{Hastings_2018}, and was chosen because it allowed for an explicit computation of the distance of the code for arbitrary $p$ $m$ and $r$, enabling us to compute $\gamma$ in the limit of large $n$, as in \cite{Hastings_2018}. However, there is no reason to expect that this scheme is the best way to puncture Reed-Muller codes. In this section, we attempt to ask, what are the best triorthogonal codes one can obtain from puncturing Reed-Muller codes? 

This appears to be a non-trivial question to answer analytically and computationally; here, we present the best triorthogonal codes obtained from punctured Reed-Muller codes for small $p$, that we could find from a rudimentary randomized computer search. Similar searches over binary Reed-Muller codes appeared in \cite{Haah_2018}. We could very easily find codes with values of $k$ and/or $d$ greater than or equal to what obtains from Manhattan distance puncturing. Some of the best triorthogonal codes we could find for $p=3$ and $p=5$ are listed in Table \ref{tab:punctured_reed_muller_p3}. Notably, we find a $[[519,106,5]]_5$ code with $\gamma=0.99$. To our knowledge, no triorthogonal code with $\gamma<1$ with block sizes less than 1000 for qubits or qutrits are known. We expect that randomly puncturing a ternary code Reed-Muller code with $m=7$ would give rise to a triorthogonal codes with $\gamma<1$, but confirming this was beyond the reach of our computational capabilities. The specific puncture locations that give rise to the codes in Table \ref{tab:punctured_reed_muller_p3} are in Appendix \ref{app:puncture-locations}.

\begin{table}[ht]
\centering
\begin{tabular}{|c|c|c|c|c}
\hline
\textbf{$m$} & \textbf{Code Parameters} & $A_d$ & \textbf{$\gamma$} \\
\hline
4 & $[[80, 1, 5]]_3$ & $2080$ & $2.72$ \\
4 & $[[79, 2, 4]]_3$ & $130$ & $2.65$ \\

4 & $[[72, 9, 3]]_3$ & $648$ &  $1.89$ \\ 
\hline
5 & $[[230, 13, 6]]_3$ & $572$ &  $1.60$ 
\\

5 & $[[215, 28, 5]]_3$ & $1104$ &  $1.27$ 
\\

5 & $[[206, 37, 4]]_3$ & $880$ &  $1.24$ 
\\
5 & $[[200, 43, 3]]_3$ & $1700$ &  $1.40$ \\
\hline
6 & $[[ 690, 39, 5 ]]_3$ & $1128$ &  $1.79$ \\
6 & $[[ 667, 62, 4 ]]_3$ & $3972$ &  $1.71$ \\
\hline
\end{tabular} $~~$
\begin{tabular}{|c|c|c|c|c}
\hline
\textbf{$m$} & \textbf{Code Parameters} & $A_d$ & \textbf{$\gamma$}  \\
\hline
2 & $[[24, 1, 3]]_5$ & $96$ & $2.89$ \\

2 & $[[20, 5, 2]]_5$ & $760$ & $2$ \\
\hline
3 & $[[124, 1, 4]]_5$ & $124$ & $3.48$ \\

3 & $[[112, 13, 3]]_5$ & $512$ & $1.96$ \\

\hline
4 & $[[ 519, 106, 5 ]]_5$ & $2180$ & $0.99$ \\
\hline
\end{tabular}
\caption{Parameters of some triorthogonal codes obtained from punctured Reed-Muller  spaces $RM_p(r_{\rm max}, m)$ for $p = 3$ (left) and $p=5$ (right). These codes were found by searching over randomly chosen puncture locations, and are  not claimed to be optimal. We find sublogarithmic distillation at $m=4$ for $p=5$.}
\label{tab:punctured_reed_muller_p3}
\end{table}

In the above table, we included the quantity $A_d$, which denotes the total number of weight-$d$ logical-$Z$ operators. $A_d$ can be used to estimate the distillation performance of these $[[n,k,d]]_p$ codes in practice, when only a few rounds (or even only one round) of distillation is required. As shown in \cite{Bravyi_2012, prakash2024lowoverheadqutritmagic}, for distillation with triorthogonal codes, given $n$ input qudits with depolarizing noise $\delta_{\rm in}$, we obtain $k$ output qudits, with depolarizing noise 
\begin{equation}
    \delta^{(i)}_{\rm out}=\frac{A^{(i)}_d}{(p-1)p^{d-1}} \delta_{\rm in}^d,
\end{equation} for $i=1, \ldots k$, where $A^{(i)}_d$ is the number of logical $Z_i$ operators (i.e., logical $Z$ operators on the $i$th output qudit) of weight $d$. It is slightly tedious to compute $A^{(i)}_d$ for each output qudit -- we therefore estimate the average error of an output qudit to be, following \cite{Haah_2018}, to be 
\begin{equation}
    \delta_{\rm out} \approx \frac{A_d}{\bar{n}_T(p-1)p^{d-1}} \delta_{\rm in}^d \left(1-\frac{p-1}{p} \delta_{\rm in}\right)^{n-d}
\end{equation}
where $A_d$ is the total number of weight-$d$ logical-$Z$ operators, and $\bar{n}_T \approx (1-\frac{p-1}{p} \delta_{\rm in})^{n} k$ is the average number of output magic-states. Applying this formula, we see that one round of distillation with the $[[519,106,5]]_5$ code, with $\delta_{\rm in}=10^{-3}$, we obtain $\delta_{\rm out}\approx 8\times 10^{-18}$, for a distillation cost $C=\frac{n}{\bar{n}_T} \approx 7.4$.

In our computational search, we did not seek to optimize $A_d$, only the yield parameter $\gamma$. To search for better distillation schemes, in situations where only one round of distillation is required, one should instead specify both $\delta_{\rm in}$ and $\delta_{\rm out}$, and perform a computational search for puncture locations that optimize the distillation cost, $C$, rather than the yield. 

\section{Discussion}
\label{sec:conclusion}
While there are difficulties associated with control of qudits, many have argued that there may be some advantages to working with qudits of larger dimension \cite{CampbellAnwarBrowne, campbellEnhanced, Krishna_2019} particularly with respect to fault-tolerance, and several experimental groups are working towards realizing qudit based quantum computers (e.g., \cite{Lanyon2007ManipulatingBQ, Bianchetti_2010, yurtalan2020characterization, kononenko2020characterization, Ringbauer2021AUQ, Goss_2022, Lindon2023, goss2023extending, subramanian2023efficient}). Quantifying and better understanding these potential advantages is important, as one explores all possible routes to fault-tolerant quantum computation. 

Punctured Reed-Muller codes provide a natural construction of triorthogonal codes for magic state distillation, that can be readily applied to qudits of all prime dimensions. At least for this particular family of codes, we are able to quantify the advantages associated with increasing qudit dimensionality rather unambiguously via the yield parameter, $\gamma$ that characterizes overhead of magic state distillation protocol. In particular, we find that, the puncturing scheme of \cite{Hastings_2018}, can be naturally generalized to qudits, and gives rise to asymptotic yield parameters in the limit $n\to \infty$ that decrease monotonically with increasing $p$, as shown in Table \ref{tab:asymptotic-yield}. At a more practical level, the smallest block size $n$ required to obtain a sublogarithmic overhead, characterized by $\gamma<1$, decreases rapidly with increasing $p$, from $p=2$ up to $p=23$, as shown in Table \ref{tab:smallest_codes}. These results are, of course, specific to Reed-Muller codes, and somewhat recently, better codes for magic state distillation have been proposed in the literature. Nevertheless, we believe that the results do illustrate and quantify the extent of the possible advantages for fault-tolerant overheads as one increases the qudit-dimensionality.  

In experimental realizations, one may wish to focus on smaller values of $p$. For qutrits, the advantage  is not as dramatic as for larger values of $p$, with $\sim 2^{52}$ qutrits required for sublogarithmic distillation compared to $2^{56}$ qubits \cite{Hastings_2018}, using our particular puncturing scheme. The codes in Table \ref{tab:smallest_codes} are certainly not optimal, however, and better choices of puncture locations could lead to vastly superior codes. In a randomized search, we could not find any punctured Reed-Muller code for qutrits with $\gamma<1$ for $n<729$, although our search was not exhaustive. Therefore the, simple block construction of \cite{prakash2024lowoverheadqutritmagic} still appears to be optimal for codes with less than $1000$ qutrits. For ququints, however, we could find a set of puncture locations giving rise to a triorthogonal code with parameters $[[520,105,5]]_5$ with $\gamma<1$. 

The existence of these sublogarithmic distillation protocols, as well as the $[[520,105,5]]_5$ triorthogonal code, demonstrate that the space of qudit error-correcting codes is still relatively unexplored. It would be interesting to carry out an exhaustive search for small triorthogonal codes, like that of \cite{Nezami_2022}, for qudits, however, this is computationally challenging. We expect, however, that randomized searches and educated guesses could give rise to interesting codes. A particularly interesting open question is to identify the smallest triorthogonal codes, for small values of $p$, that give rise to sublogarithmic distillation. 

\section*{Acknowledgements}
TS thanks Prof. Prasad Krishnan for fruitful discussions pertaining to the proof of the distance formula, Michael Zurel for his guidance and fruitful discussions. SP thanks Rev. Prof. Prem Saran Satsangi for inspiration and guidance. SP also thanks Prof. Nadish de Silva and Simon Fraser University for hospitality where part of this work was completed.  Both authors acknowledge the support of DST-SERB grant (CRG/2021/009137).

\appendix

\section*{Appendices}

\section{Asymptotic analysis}
\label{asymptotic-appendix}
In this section we work out asymptotic expressions for the $p$-nomial coefficients $\binom{m}{w}_p$, in the limit $m \to \infty$, with the ratio $w/m =\theta$ held fixed.  We then use these expressions to determine the yield parameter $\gamma_\infty(p)$ in the asymptotic limit $m \to \infty$. We find that, in the limit of large $p$, $\gamma_{\infty}(p) \sim \frac{1}{\ln p}$.

\subsection{Asymptotic expansions for $p$-nomial coefficients}

We can write the $p$-nomial coefficient as a contour integral, directly from its definition,
\begin{equation}
\begin{split}
    \binom{m}{w}_{p} & = \frac{1}{w!} \frac{d^w}{dx^w}\big|_{x=0}(1+x+x^2 + \ldots x^{p-1}) \\& = \frac{1}{2\pi i} \oint \frac{(1-z^p)^m}{(1-z)^m z^{w+1}} dz.
    \end{split}
\end{equation}
The integration contour may be any closed contour, counter-clockwise, enclosing the origin. Now let $w=\theta m$, and assume $m \gg 1$. Note that 
\begin{equation}
    0 \leq \theta \leq (p-1).
\end{equation}

We then have 
\begin{equation}    \begin{split}
    \binom{m}{\theta m}_{p}  = & \frac{1}{2\pi i} \oint \left(\frac{(1-z^p)}{(1-z) z^{\theta+\frac{1}{m}}}\right)^m dz. \\
    =  & \frac{1}{2\pi i} \oint e^{m f(z)} dz,
    \end{split}
\end{equation}
where, 
\begin{equation}
    f(z) = \ln \left(\frac{(1-z^p)}{(1-z) z^{\theta+1/m}}\right) \approx \ln (1-z^p)-\ln (1-z)-\theta \ln z.
\end{equation}
We evaluate this, in the limit of large $m$, by the method of steepest descent. Expressing our answers in terms of $w$ rather than $m$, the result is
\begin{equation}
    \binom{w/\theta}{w }_{p} \sim \frac{\xi ^{-w-1} \left(\frac{\xi ^p-1}{\xi -1}\right)^{\frac{ w}{\theta }}}{\sqrt{2 \pi } \sqrt{\frac{\theta +w \left(\theta +\frac{ \xi ^2}{(\xi -1)^2}-\frac{ p \left(\xi ^p+p-1\right) \xi ^p}{\left(\xi ^p-1\right)^2}\right)}{\theta  \xi ^2}}}.
\end{equation}
where $\xi(\theta)$ is the dominant solution of the saddle-point equation, $f'(z)\big|_{z=\xi}=0$, which is defined implicitly as the solution to the equation,
\begin{equation}
    \frac{(p (\xi-1)-\xi) \xi^p+\xi}{(\xi-1) \left(\xi^p-1\right)} =\theta. \label{saddle-point-equation}
\end{equation}

Defining, 
\begin{equation}
    {H}_p(\theta) = \lim_{m\to \infty} \frac{1}{m} \log_p \binom{m}{\theta m }_{p} = \lim_{w\to \infty} \frac{\theta}{w} \log_p \binom{w/\theta}{w }_{p},
\end{equation} 
we have,
\begin{equation}
    H_p(\theta) =  \log_p \left(\frac{\xi^p-1}{\xi-1}\right)-\frac{\theta}{3}\log_p (\xi). \label{asymptotic-pnomial-entropy}
\end{equation}
While this is a rather explicit expression for $H_p(\theta)$, one must solve a polynomial of degree $p-1$ to obtain $\xi(\theta)$. Let us consider a few small cases, to obtain explicit formulae.

For $p=2$, the saddle-point equation is linear, and the dominant saddle is 
\begin{equation}
    \xi(\theta)=\frac{\theta}{1-\theta},
\end{equation} 
so we find,
\begin{equation}
H_2(\theta)= -\log \left(1-{\theta }\right)-\theta  \log \left(\frac{\theta }{1-\theta }\right) .
\end{equation}
This is, of course, just the binary entropy function.

For $p=3$, the dominant saddle is 
\begin{equation}
\xi = \frac{\theta +\sqrt{3(2-\theta ) \theta +1}-1}{2(2-\theta )}.
\end{equation}
We find,
\begin{equation}
    {H}_3(\theta) = \log \left(\frac{7-3 \theta +\sqrt{1-3 \theta (\theta -2) }}{2 (\theta -2)^2}\right)-\theta  \log \left(\frac{\theta +\sqrt{1-3 \theta (\theta -2) }-1}{2 (2-\theta)}\right).
\end{equation}
A plot of $H_3(\theta)$ compared with numerical values of $\binom{m}{w}_3$ for large $m$ and $w$, is shown in Figure \ref{fig:H3}.

\begin{figure}
    \centering
    \includegraphics[width=0.5\linewidth]{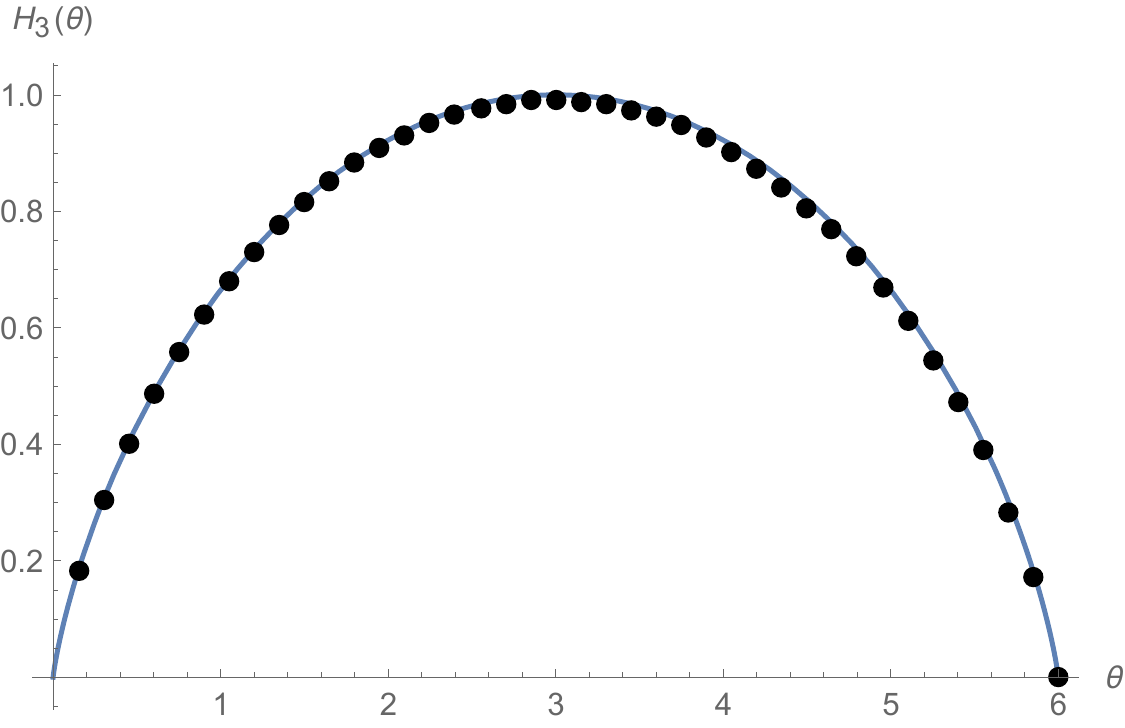}
    \caption{$H_3(2t)$  (blue) compared with numerical values (black points) computed for $m=\frac{w}{2t}$ and $w$, for $w=400$.}
    \label{fig:H3}
\end{figure}

For larger values of $p$, we find that the saddle point equation is solvable via computer algebra system, but the analytic solutions are rather unwieldy. Using these expressions, however, $H_p(\theta)$ can be computed numerically with ease. An alternative approach is to  express $H$ as a function of $\xi(\theta)$ rather than $\theta$. 

Note that the saddle-point equation determines $\theta(\xi)$, via a function that is monotonically increasing for $\xi \in (0,\infty)$. $\theta(0)=0$ and $\lim_{\xi \to \infty} \theta(\xi)=p-1$.  

We therefore have the explicit expression,
\begin{equation}
    H_p(\theta) = \log_p \left(\frac{\xi(\theta) ^p-1}{\xi(\theta) -1}\right)+\frac{ \left(\xi(\theta) ^p (\xi(\theta)(1-p) +p)-\xi(\theta) \right)}{(\xi(\theta) -1) \left(\xi(\theta) ^p-1\right)}\log_p (\xi(\theta)),
\end{equation}
where one may use $\xi(\theta) \in (0, \infty)$ instead of $\theta$ to parametrize the ratio of $w$ and $m$.  

\subsection{Asymptotic expansion of $p$-nomial coefficients in the limit of large $p$}
Let us now derive asymptotic formulae for large $p$. Let $\theta=t(p-1)$, so that $0<t<1$ for any $p$. We compute $\hat{H}_p(t)= H_p(t (p-1))$ when $p \to \infty$. The saddle point equation becomes,
\begin{equation}
    \left(z^p-1\right) (t (z-1)- z)-p (z-1) \left(t \left(z^p-1\right)- z^p\right) =0.
\end{equation}
We use the ansatz that the dominant saddle point must take the form $\xi =1 + \xi_1(t)/p + O(1/p^2)$, and will attempt to determine $\xi_1(t)$. This ansatz is consistent with numerical solutions of the saddle point equation at large $p$. We find $\xi_1(t)$ must satisfy the non-linear equation, 
\begin{equation}
t=-\frac{1}{\xi_{1}}+\frac{1}{e^{\xi_{1}}-1}+1 \label{subleading-saddle}
\end{equation}
There is a unique solution $\xi_1(t)$ for $t \in (0,1)$. In terms of $\xi_1(t)$ and $t$, we have,
\begin{equation}
\hat{H}_p(t) = \frac{t(p-1)}{w}\log_p \binom{w/t(p-1)}{w}_p  \approx 1 -\frac{t \xi_1(t)}{\ln p}+\log_p \left(\frac{e^{\xi_1(t)-1}}{\xi_1(t)}\right).
\end{equation}

To proceed further we observe, as before, that $t(\xi_1)$ is given by Equation \eqref{subleading-saddle}, which is monotonically increasing $t(-\infty) \to 0$ to $t(\infty) \to 1$, as shown in Figure \ref{fig:t-xi1}. By choosing $\xi_1 \in (-\infty, \infty)$ we can parameterize all $t \in (0,1)$. We now substitute in $t(\xi_1)$ using equation \eqref{subleading-saddle}, to obtain,
\begin{equation}
\hat{H}_p(t) = 1+\frac{1-\frac{e^{\xi_{1}(t)} \xi_{1}(t)}{e^{\xi_{1}(t)}-1}+\ln \left(\frac{e^{\xi_{1}(t)}-1}{\xi_{1}(t)}\right)}{\ln (p)}. \label{Hhat-xi1}
\end{equation}

\begin{figure}
    \centering
    \includegraphics[width=0.5\linewidth]{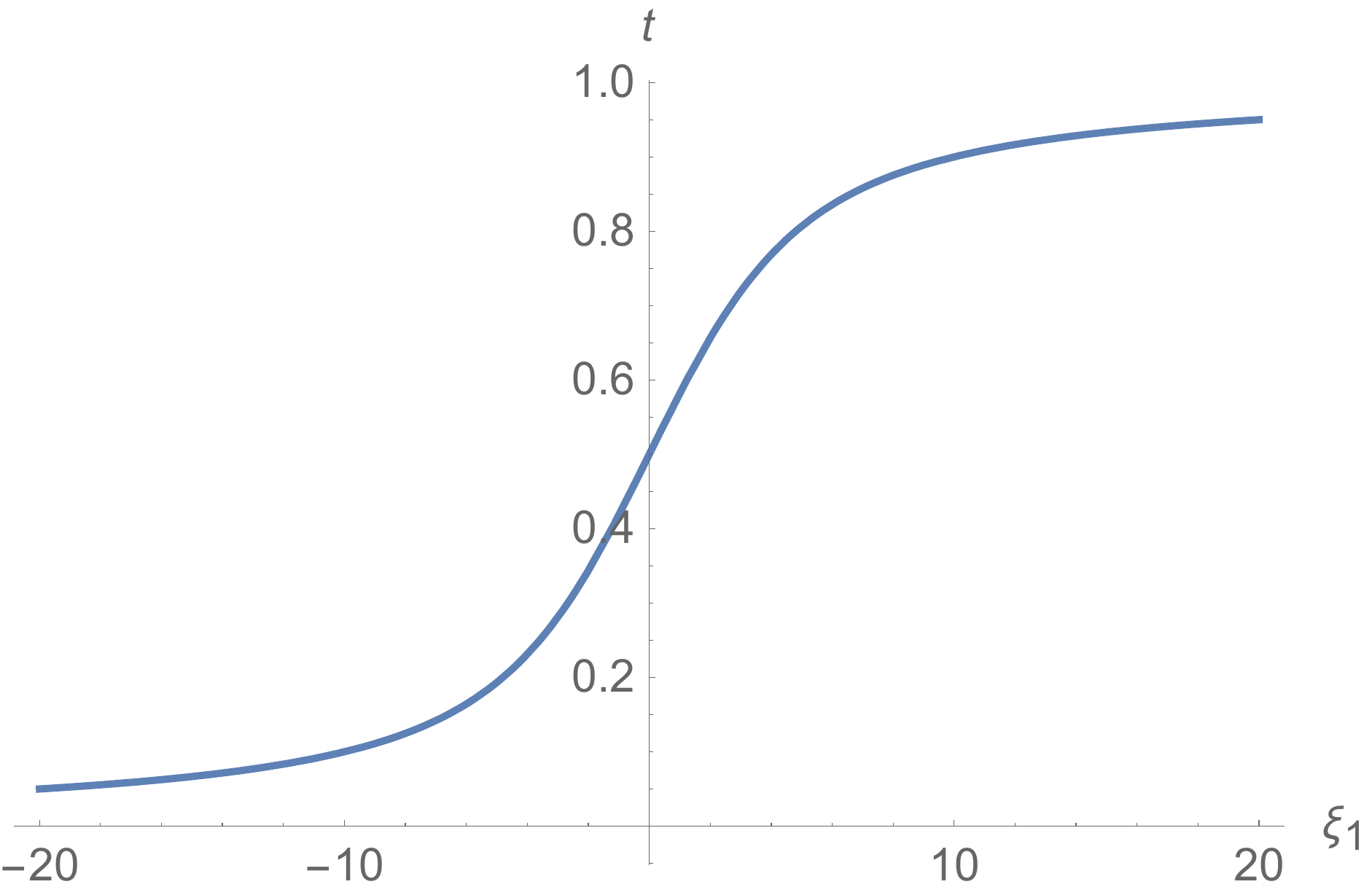}
    \caption{A plot of equation \eqref{subleading-saddle} for $t(\xi_1)$.}
    \label{fig:t-xi1}
\end{figure}

\subsection{Asymptotic limit of the yield parameter}
We use the results of the previous section to work out the asymptotic limit of the yield parameter $\gamma$ in the limit of $m \to \infty$ with $w/m=\theta$ is held fixed. Define,
\begin{equation}
H^{(\leq)}_p(\theta)=\frac{1}{m}\log \binom{m}{\leq \theta m}, \text{ and,}\quad  H^{(>)}_p(\theta)=\frac{1}{m}\log \binom{m}{> \theta m}.
\end{equation}
We clearly have 
\begin{equation}
    H^{(\leq)}_p(\theta) = \begin{cases}
        H_p(\theta)  & \theta<(p-1)/2 \\
        1  & \theta>(p-1)/2 \\
    \end{cases}
\end{equation}
and 
\begin{equation}
    H^{(>)}_p(\theta) = \begin{cases}
        1  & \theta<(p-1)/2 \\
        H_p(\theta)  & \theta>(p-1)/2 \\
    \end{cases}.
\end{equation}

The yield parameter, in the limit of large $m=3\alpha$, is given by,
\begin{equation}
    \gamma_0(\theta) = \lim_{m \to \infty} \frac{\log_p \binom{m}{>w}_p - \log_p \binom{m}{\leq w}_p}{\log_p \binom{m/3}{>w}_p}  \to \begin{cases} 3 ({1-H_p(\theta)}) & \theta< \frac{p-1}{6} \\  3 \frac{1-H_p(\theta)}{H_p(3\theta)} & \theta>\frac{p-1}{6}.
    \end{cases} \label{gamma-equation}
\end{equation}
In these expressions, we require $3\theta <  p-1$. 

For $p=3$, $\gamma_0(\theta)$ is plotted in Figure \ref{fig:gamma3} and is given by, 
\begin{equation}
\gamma_0(\theta) = \frac{3 \left(-\theta  \log (4-2 \theta )+\log \left(\frac{1}{2} \left(-3 \theta -\sqrt{1-3 (\theta -2) \theta }+7\right)\right)+\theta  \log \left(\theta +\sqrt{1-3 (\theta -2) \theta }-1\right)\right)}{3 \theta  \log (4-6 \theta )-3 \theta  \log \left(3 \theta +\sqrt{9 \theta  (2-3 \theta )+1}-1\right)+\log \left(-\frac{6}{9 \theta +\sqrt{9 \theta  (2-3 \theta )+1}-7}\right)}
\end{equation}
for $\theta>1/3$. We find the minimum is $\gamma_0=0.632$, which occurs at $\theta_0=0.547$. Optimal yield parameters for other values of $p$ are in Table \ref{tab:asymptotic-yield} and plotted in Figure \ref{fig:gamma-p-plot}.

\begin{figure}
    \centering
    \includegraphics[width=0.5\linewidth]{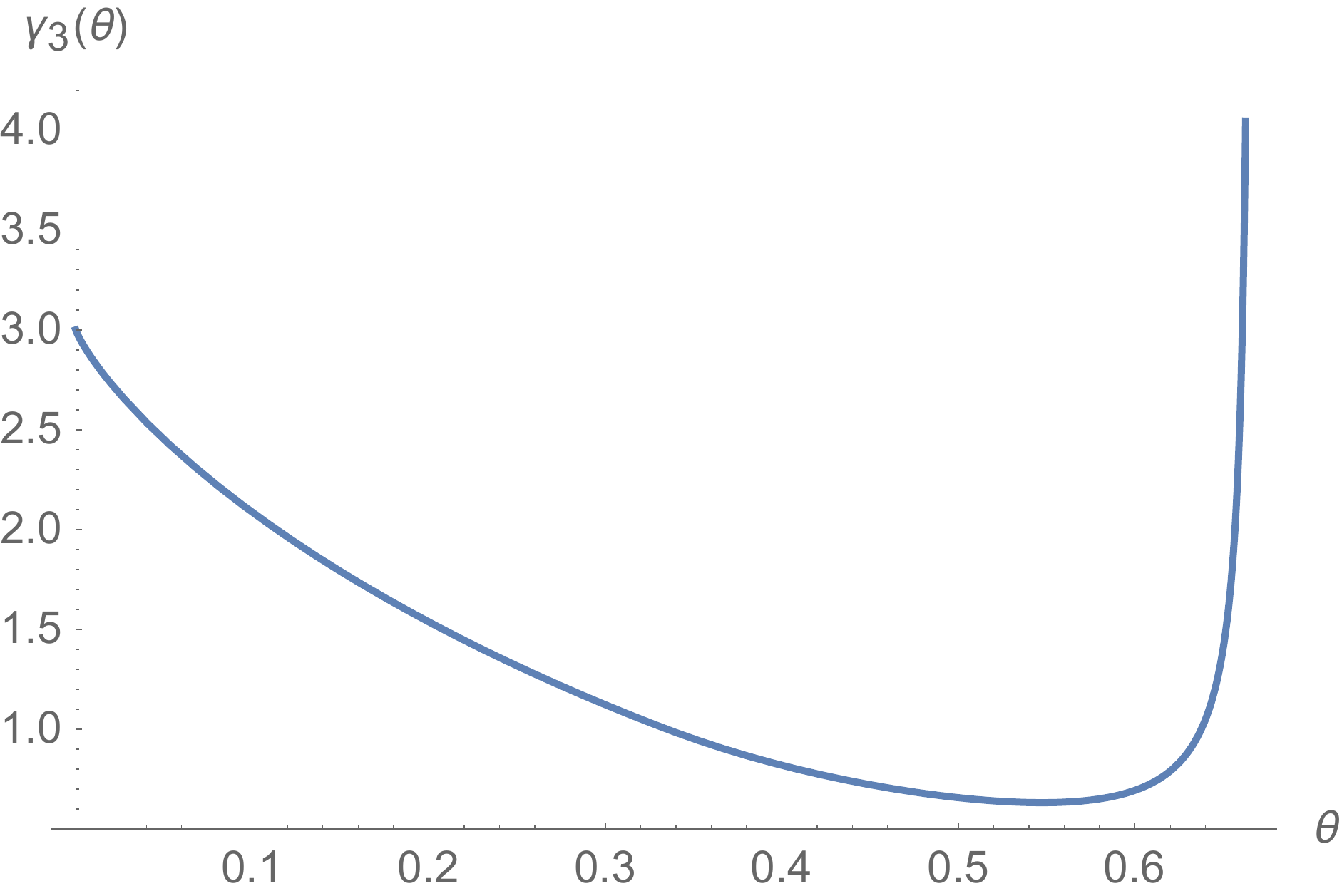}
    \caption{$\gamma(\theta)$ for $p=3$. The minimum is at $\gamma=0.632$, with $\theta=0.547$}
    \label{fig:gamma3}
\end{figure}

\begin{figure}
    \centering
    \includegraphics[width=0.5\linewidth]{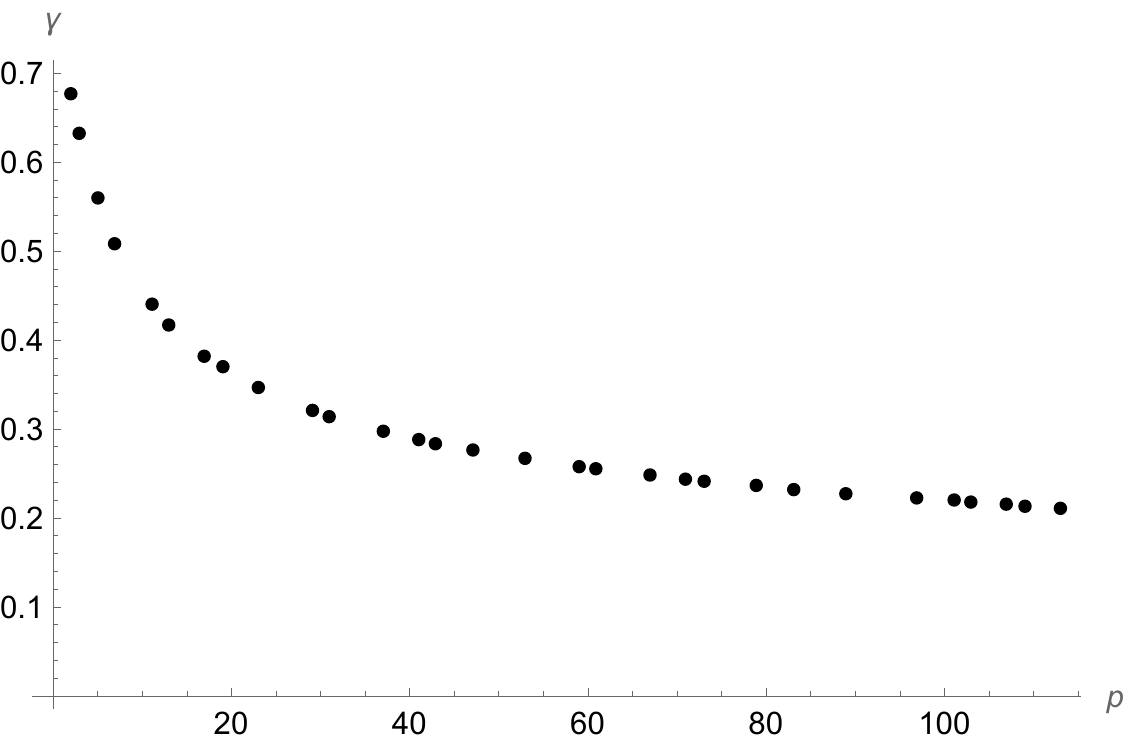}
    \caption{A plot of the best yield parameter $\gamma$ as a function of $p$.}
    \label{fig:gamma-p-plot}
\end{figure}
\subsection{Asymptotic yield in the limit of large $p$}

Let us consider the value of $\theta=(p-1)/6$, or $t=1/6$. This is not claimed to be the optimal value of $t$, but it suffices to obtain a bound. (If $t_*$ is the optimal value of $t$ that minimizes $\gamma$, the we clearly have $\gamma_p(t_*) \leq \gamma_p(1/6)$.) 
From equation \eqref{gamma-equation}, we have, 
\begin{equation}
    \gamma_p \to 3(1-\hat{H}_p(t)).
\end{equation}

Using the expressions for $\hat{H}_p(t)$ in terms of $\xi_1(t)$ in Equation \eqref{Hhat-xi1}, we have,
\begin{equation}
    \gamma_p(1/6) \to \frac{3 \frac{e^{\xi_{1}(\frac{1}{6})} \xi_{1}(\frac{1}{6})}{e^{\xi_{1}(\frac{1}{6})}-1}-3\ln \left(\frac{e^{\xi_{1}(\frac{1}{6})}-1}{\xi_{1}(\frac{1}{6})}\right)}{\ln p} \approx \frac{2.38309}{\ln p}. \label{asymptotic-formula}
\end{equation}
Note that $\xi_1(t)$ is a constant independent of $p$. Thus $\gamma_p(1/6) \sim c/\ln p$ as $p \to \infty$. We evaluated the numerical constant $c$, using $\xi_1(1/6) \approx -5.903$. Equation \eqref{asymptotic-formula} agrees well with numerics. We compare with the computation of $\gamma_p(1/6)$ exactly for the first 100 primes in Figure \ref{fig:yield-t16-plot}.

\begin{figure}
    \centering
    \includegraphics[width=0.5\linewidth]{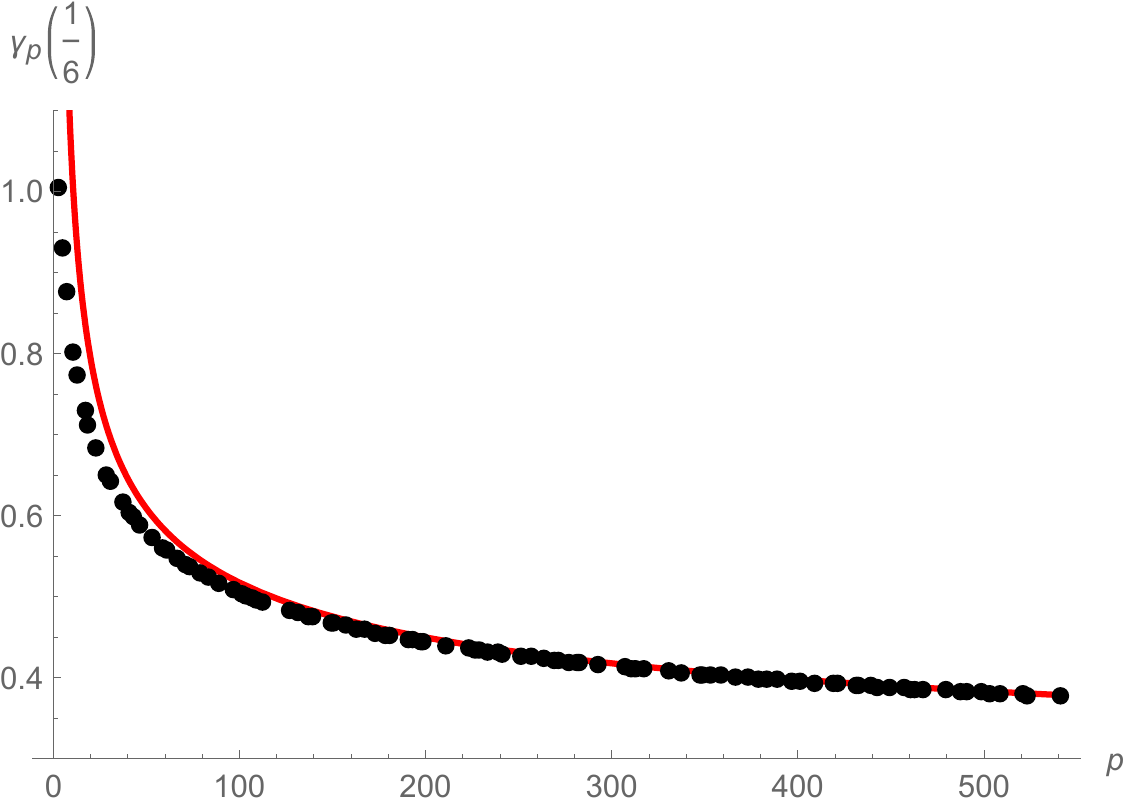}
    \caption{We plot the asymptotic yield parameter $\gamma$, evaluated at $t=\frac{w}{m(p-1)}=\frac{1}{6}$, from the asymptotic formula for large $p$, equation \eqref{asymptotic-formula} (red line), and directly in the large $m$ limit for finite $p$, (black points), for the first $100$ primes. There is good agreement at larger $p$.}
    \label{fig:yield-t16-plot}
\end{figure}

For \textit{very} large $p$, $t_* \to \frac{1}{3}$, as, for any small positive value of $\epsilon$, the denominator of $\frac{3(1-\hat{H}_p(1/3-\epsilon))}{\hat{H}_p(1-3\epsilon)} \to \frac{3(1-\hat{H}_p(1/3-\epsilon))}{1}$ for sufficiently large $p$. Thus $\gamma(t_*) \sim \frac{c}{\ln p}$, where $c \to 0.518$ from above, very slowly, as $p \to \infty$.

\section{Proof of technical lemma}
\label{app:p-nomial}
We use the following lemma in the proof of Theorem \ref{distance-theorem}.

\begin{lemma}
    \begin{equation}
    \sum_{i=0}^{p-1-A} \Delta_{p}(m, r-A, w-i) \geq \sum_{i=0}^{p-1} \Delta_{p}(m, r, w-i) = \Delta_{p}(m+1, r, w).
\end{equation}
with equality if $A=0$.
\label{distance-lemma-2}
\end{lemma}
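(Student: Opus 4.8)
The plan is to unwind both sides using the explicit formula for $\Delta_p$ from Theorem \ref{distance-theorem} and reduce everything to an inequality between sums of $p$-nomial coefficients, which can then be attacked by the generalized Pascal's rule (Lemma \ref{eq:genPascal}). First I would write $r = \alpha(p-1)+\beta$ with $\beta \in \{0,1,\dots,p-2\}$, so that $\Delta_p(m,r,w) = \sum_{j=0}^{p-\beta-1}\binom{m-\alpha-1}{>w-j}_p$, and carefully track how $\alpha$ and $\beta$ change when $r$ is decreased to $r-A$: writing $r - A = \alpha'(p-1) + \beta'$, the pair $(\alpha',\beta')$ depends on whether $A \le \beta$ or $A > \beta$ (and in general $\alpha' = \alpha - \lceil (A-\beta)/(p-1)\rceil$ or $\alpha'=\alpha$). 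Since both sides of the claimed inequality only involve $\Delta_p$ with the \emph{same} first argument $m$ and shifted third arguments, I would first handle the easy case $A=0$, where the claimed equality is exactly the content of the second displayed line after ``using the \hyperref[eq:genPascal]{generalized Pascal's rule}'' in the proof of Theorem \ref{distance-theorem}, namely $\sum_{i=0}^{p-1}\Delta_p(m,r,w-i) = \Delta_p(m+1,r,w)$.

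For $A \ge 1$ the key observation I would use is monotonicity of $\Delta_p$ in its degree argument: $\Delta_p(m, r', w) \ge \Delta_p(m, r, w)$ whenever $r' \le r$ (this is stated and used in the proof of Theorem \ref{distance-theorem}, following from $PRM_p(r_1,m)\subseteq PRM_p(r_2,m)$ for $r_1\le r_2$ after dualizing, or directly from the combinatorial formula). Hence termwise $\Delta_p(m, r-A, w-i) \ge \Delta_p(m, r, w-i)$ for each $i$. The difficulty is that the left sum runs only over $i = 0,\dots,p-1-A$, i.e. it has $A$ \emph{fewer} terms than the right sum $\sum_{i=0}^{p-1}\Delta_p(m,r,w-i)$. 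So I must show that the ``degree savings'' from passing $r \to r-A$ more than compensates for the $A$ missing terms $\Delta_p(m,r,w-(p-A)),\dots,\Delta_p(m,r,w-(p-1))$ — the smallest terms in the sum, since $\Delta_p(m,r,\cdot)$ is decreasing in $w$.

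Concretely I would argue by induction on $A$. For the step $A \to A+1$ (with $A+1 \le p-1$), it suffices to show
\begin{equation}
\sum_{i=0}^{p-2-A}\Delta_p(m, r-A-1, w-i) \;\ge\; \sum_{i=0}^{p-1-A}\Delta_p(m, r-A, w-i),
\end{equation}
i.e. that replacing $r-A$ by $r-A-1$ and dropping the single last term $\Delta_p(m,r-A,w-(p-1-A))$ does not decrease the sum. Using the combinatorial formula for $\Delta_p$ and shifting $\beta \to \beta - 1$ (or, when $\beta = 0$, $\alpha \to \alpha-1$ and $\beta \to p-2$, which introduces a full extra block and makes the inequality only easier), the difference $\Delta_p(m,r-A-1,w-i) - \Delta_p(m,r-A,w-i)$ telescopes to a single nonnegative $p$-nomial term of the form $\binom{m-\alpha'-1}{>w-i-(p-\beta')}_p$ (all such coefficients are $\ge 0$ by the remark after the definition of $\binom{m}{>s}_p$). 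Summing these nonnegative gains over $i=0,\dots,p-2-A$ and comparing against the single dropped term, again a $p$-nomial coefficient, via Lemma \ref{eq:genPascal}, closes the induction. The main obstacle I anticipate is purely bookkeeping: correctly handling the ``carry'' in $r-A = \alpha'(p-1)+\beta'$ when $A$ crosses a multiple of $p-1$, and checking that the degenerate boundary cases ($\alpha' $ hitting $m-1$, or $w-i$ going negative so that a $\binom{\cdot}{>\text{negative}}_p = p^{\cdot}$) are consistent with the stated formula; the analytic content beyond that is entirely carried by monotonicity in $r$ and the generalized Pascal identity.
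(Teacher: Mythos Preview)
Your induction on $A$ does not work: the quantity $S(A):=\sum_{i=0}^{p-1-A}\Delta_p(m,r-A,w-i)$ is \emph{not} monotone in $A$, so the step ``$S(A+1)\ge S(A)$'' that you propose as sufficient is false in general. A small counterexample: take $p=3$, $m=2$, $r=2$, $w=0$. Then $S(0)=\Delta_3(3,2,0)=\binom{2}{>0}_3=8$, while $S(1)=\Delta_3(2,1,0)+\Delta_3(2,1,-1)=5+6=11$ and $S(2)=\Delta_3(2,0,0)=\binom{2}{>0}_3=8$. Thus $S(2)<S(1)$: the ``degree savings'' from $r-1\to r-2$ do \emph{not} compensate for dropping the last term. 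The lemma itself survives here ($S(2)=S(0)=8$), but not via your chain of inequalities. The single-term telescoping you describe is correct when $\beta'\ge 1$, but the resulting gain summed over $p-1-A$ values of $i$ need not dominate the dropped term, which is itself a sum of $p-\beta'$ coefficients.

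The paper does not induct on $A$. It compares $S(A)$ with $S(0)$ directly by expanding both as double sums $\sum_i\sum_j\binom{m-\alpha-1}{>w-i-j}_p$ from the formula in Theorem~\ref{distance-theorem}, and splits into the two cases $\beta\ge A$ and $\beta<A$, i.e.\ exactly according to whether $r\mapsto r-A$ causes a carry in the base-$(p-1)$ decomposition. In the first case both double sums share the same summand over different index rectangles, and after a reindexing the difference $S(A)-S(0)$ becomes a manifestly nonnegative sum. In the second case one first applies the generalized Pascal rule to unfold $\binom{m-\alpha}{\cdot}_p$ into an extra sum over $\binom{m-\alpha-1}{\cdot}_p$ and then matches terms. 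The carry bookkeeping you anticipated is precisely this case split --- but it must be handled for the given $A$ in one shot, not accumulated one unit at a time.
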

\begin{proof}
Here $r=(p-1)\alpha+\beta$. 
\begin{enumerate}
\item[Case 1:] $\beta \geq A$.
We expand both sides and compare terms:
\begin{align}
     S_1=\sum_{i=0}^{p-1-A} \Delta_{p}(m, r-A, w-i) &= \sum_{i=0}^{p-1-A} \sum_{j=0}^{p-\beta+A-1} \binom{m - \alpha - 1}{> w - i - j}_{p}\\
    S_2 = \sum_{i=0}^{p-1} \Delta_{p}(m, r, w-i) &= \sum_{i=0}^{p-1} \sum_{j=0}^{p-\beta-1} \binom{m - \alpha - 1}{> w - i - j}_{p}  \label{s2}
\end{align}
Note that, if $A=0$, $S_1=S_2$, so assume $A>0$.

We see that,
\begin{equation}
\begin{split}
    S_1 - S_2 & =  \sum_{i=0}^{p-1-A}~\sum_{j=p-\beta}^{p-1-\beta+A} \binom{m - \alpha - 1}{> w - (i+j)}_{p} -\sum_{i=p-A}^{p-1}~\sum_{j=0}^{p-1-\beta} \binom{m - \alpha - 1}{> w - (i+j)}_{p}  \\ &= \sum_{i'=p-\beta}^{2p-\beta -A -1}~\sum_{j'=0}^{A-1} \binom{m - \alpha - 1}{> w - (i'+j')}_{p} -\sum_{i'=0}^{A-1}~\sum_{j'=p-A}^{2p-1-\beta-A} \binom{m - \alpha - 1}{> w - (i'+j')}_{p}
    \\ &= \sum_{i=p-\beta}^{p-A -1}~\sum_{j=0}^{A-1} \binom{m - \alpha - 1}{> w - (i+j)}_{p}  \geq 0,
    \end{split}
\end{equation}
and hence the theorem is proved. Here we used the fact that the summand only depends on $i+j$ in the second equality. 

\item[Case 2:]  $\beta<A$. 
\begin{align}
    S_1 = \sum_{i=0}^{p-1-A} \Delta_{p}(m, r-A, w-i) &= \sum_{i=0}^{p-1-A} \sum_{j=0}^{A-\beta} \binom{m-\alpha}{>w-i-j}_p \\
    &= \sum_{k=0}^{p-1} \sum_{i=0}^{p-1-A} \sum_{j=0}^{A-\beta} \binom{m-\alpha-1}{>w-i-j-k}_p.
\end{align}
$S_2$ is again given by eqn. \eqref{s2}.
Notice the $k$ summation in the $S_1$ is the same as the $i$ summation on the $S_2$. It only remains to prove that $$\sum_{i=0}^{p-1-A} \sum_{j=0}^{A-\beta} \binom{m-\alpha-1}{>w-i-j-k}_p \geq \sum_{j=0}^{p-1-\beta} \binom{m-\alpha-1}{>w-j-k}_p \; \forall k.$$ It is easy to see that this is true -- for any term corresponding to $j$ on the $RHS$, we always have a corresponding term on the $LHS$ that matches, and all terms are positive.
\end{enumerate}

\end{proof}

\section{Puncture locations}
\label{app:puncture-locations}
In this appendix, we list the puncture locations corresponding to the codes in section \ref{sec:other-schemes}. 

The unpunctured generator matrix for the triorthogonal space originating from a Reed-Muller code contains $p^m$ columns. Our puncture locations are given as column numbers, where the first column is $c=1$. Each column in the generator matrix for the triorthogonal space corresponds to a point in $\mathbb F_p^m$,  $\vec{x}=(x_1, \ldots, x_m)$ with $x_i \in \mathbb F_p$. The $c$th column represents the point corresponding to $c-1$ expressed in base-$p$, i.e., $c-1=x_1 + x_2 p + x_3p^2 + \ldots$. 

We only list puncture locations for codes with more than 2 punctures. For $p=3$, we have:
\begin{itemize}
    \item $[[72,9,3]]_3$:
    \begin{dmath}
        \parbox{0.8\linewidth}{ \{12, 29, 34, 36, 53, 57, 63, 67, 75\} }
    \end{dmath}
        \item $[[230,13,6]]_3$:
    \begin{dmath}
        \parbox{0.8\linewidth}{ \{43, 51, 74, 110, 140, 146, 147, 153, 180, 190, 198, 200, 228\} }
    \end{dmath}
        \item $[[215,28,5]]_3$:
    \begin{dmath}
        \parbox{0.8\linewidth}{ \{4, 11, 38, 41, 43, 51, 52, 54, 57, 59, 67, 74, 77, 88, 110, 140, 146, 147, 153, 180, 183, 190, 191, 198, 200, 206, 222, 228 \} }
    \end{dmath}
        \item $[[206,37,4]]_3$:
    \begin{dmath}
        \parbox{0.8\linewidth}{ \{5, 10, 13, 15, 23, 25, 37, 51, 55, 58, 62, 67, 78, 80, 85, 86,
93, 94, 95, 110, 113, 123, 137, 142, 146, 155, 166, 169, 180, 181, 182, 199, 
203, 207, 208, 223, 228\} }
    \end{dmath}
      \item $[[200,43,3]]_3$:
    \begin{dmath}
        \parbox{0.8\linewidth}{ \{5, 10, 13, 15, 23, 25, 27, 37, 51, 55, 58, 62, 67, 78, 80, 85,
86, 93, 94, 95, 110, 113, 114, 123, 137, 142, 146, 147, 155, 166, 169, 180, 181,
182, 199, 201, 203, 207, 208, 215, 218, 223, 228\}}
    \end{dmath}
     \item $[[ 690, 39, 5 ]]_3$:
    \begin{dmath}
        \parbox{0.8\linewidth}{ \{32, 35, 46, 71, 75, 92, 117, 144, 146, 170, 181, 198, 205, 247, 255, 256, 261, 266, 277, 289, 303, 305, 330, 
361, 372, 417, 424, 442, 477, 515, 527, 535, 557, 593, 605, 679, 713, 718, 729\} }
    \end{dmath}
     \item $[[667,62,4]]_3$:
    \begin{dmath}
       \parbox{0.8\linewidth}{  \{9, 18, 32, 35, 46, 71, 75, 85, 92, 93, 117, 144, 146, 153, 
166, 170, 174, 181, 198, 205, 238, 239, 247, 255, 256, 261, 266, 277, 289, 297, 
303, 305, 312, 330, 347, 361, 371, 372, 417, 424, 442, 445, 477, 494, 515, 521, 
522, 527, 535, 544, 557, 585, 593, 605, 648, 654, 656, 679, 713, 715, 718, 729\} }
    \end{dmath}
\end{itemize}
For $p=5$
\begin{itemize}
        \item $[[20,5,2]]_5$:
    \begin{dmath}
        \parbox{0.8\linewidth}{ \{3, 6, 12, 16, 23 \} }
    \end{dmath}
        \item $[[112,13,3]]_5$:
    \begin{dmath}
        \parbox{0.8\linewidth}{ \{13, 18, 29, 33, 34, 46, 47, 58, 61, 79, 91, 111, 124\} }
    \end{dmath}
        \item $[[519,106,5]]_5$:
    \begin{dmath}
        \parbox{0.8\linewidth}{ \{9, 16, 19, 21, 44, 51, 54, 94, 99, 101, 106, 111, 117, 118, 
119, 127, 129, 137, 143, 144, 145, 146, 147, 148, 156, 169, 183, 197, 201, 204, 
209, 220, 223, 234, 241, 256, 268, 270, 273, 275, 280, 282, 285, 293, 294, 296, 
314, 321, 322, 325, 327, 341, 344, 353, 354, 357, 369, 376, 380, 408, 413, 423, 
443, 447, 448, 451, 452, 462, 472, 474, 476, 480, 482, 490, 493, 494, 498, 504, 
511, 512, 514, 519, 520, 524, 526, 529, 530, 532, 538, 545, 549, 552, 556, 560, 
562, 568, 570, 577, 590, 595, 597, 601, 606, 609, 611, 617\} }
    \end{dmath}
\end{itemize}
\bibliographystyle{plain}
\bibliography{ref}

\end{document}